\newtheorem{thm}{Theorem}
\newtheorem{lem}[thm]{Lemma}
\newtheorem{prop}[thm]{Proposition}
\newtheorem{cor}[thm]{Corollary}
\newtheorem{rem}[thm]{Remark}
\newtheorem{ques}[thm]{Question}
\newtheorem{exmp}[thm]{Example}
\theoremstyle{definition}
\newtheorem{definition}[thm]{Definition}
\newcommand\W{\mathcal W}
\newcommand\U{\mathcal U}
\newcommand\T{\mathcal T}
\newcommand\K{\mathcal K}
\newcommand\ct[1][G]{\mathcal C_{#1}}
\newcommand{\actson}{\curvearrowright}
\newcommand\deux{\mathbf2}
\newcommand{\R}{\mathbb R}
\newcommand{\Z}{\mathbb Z}
\newcommand{\N}{\mathbb N}
\renewcommand{\H}{\mathbb H}
\newcommand\orb{\omega}
\newcommand{\PGL}{\text{PGL}}
\newcommand{\PSL}{\text{PSL}}
\newcommand{\Isom}{\text{Isom}}
\newcommand{\sym}{\text{Sym}}
\newcommand{\substitution}{\sigma}
\newcommand{\tiling}{\mathbf{T}}
\newcommand{\pattern}{M}
\newcommand{\support}{\mathrm{supp}}
\newcommand{\ifnv}[2]{\ifthenelse{\equal{#1}{}}{}{#2}}
\newcommand{\sett}[3][]{\left\{\left.#2\ifnv{#1}{\in #1}\vphantom{#3}\right|#3\right\}} 
\newcommand\Stab{\text{Stab}}
\newcommand\St[1]{\Stab(#1)}
\newcommand{\defeq}{:=}
\newcommand{\dfn}[1]{\textbf{#1}}
\DeclareMathOperator*{\id}{id}
\newcommand\grp[1][0]{\Psi_{#1}}
\newcommand\geo[1][0]{\orb_{#1}}
\newcommand{\resp}[1]{\ (resp. #1)}
\newcommand\gat{\text{Cucaracha}} 
\journal{Theoretical Computer Science}
\begin{document}

\begin{frontmatter}
\title{Aperiodic monotiles: from geometry to groups}
\author[AMU]{Thierry Coulbois}  
\author[AMU,UdeC]{Anahí Gajardo}
\author[AMU]{Pierre Guillon}
\author[AMU]{Victor Lutfalla}
\affiliation[AMU]{organization={Institut de Mathématique de Marseille (I2M), CNRS, Aix-Marseille Université},
           country={France}}
\affiliation[UdeC]{organization={Departamento de Ingeniería Matemática, Center for Research on Mathematical Engineering (CI$^2$MA), Universidad de Concepción},
           country={Chile}}

\begin{abstract}
In 2023, two striking, nearly simultaneous, mathematical discoveries have excited their respective communities, one by Greenfeld and Tao, the other (the \emph{Hat} tile) by Smith, Myers, Kaplan and Goodman-Strauss, which can both be summed up as the following: there exists a single tile that tiles, but not periodically (sometimes dubbed the \emph{einstein} problem).
The two settings and the tools are quite different (as emphasized by their almost disjoint bibliographies): one in Euclidean geometry, the other in group theory.
Both are highly nontrivial: in the first case, one allows complex shapes; in the second one, also the space to tile may be complex.

We propose here a framework that embeds both of these problems.
  From any tile system in this general framework, with some natural additional conditions, we exhibit a construction to \emph{simulate} it by a group-theoretical tiling.
  We illustrate this by transforming the \emph{Hat} tile into a new aperiodic group monotile,
  and we describe the symmetries of both the geometrical \emph{Hat} tilings and the group tilings we obtain. 
\end{abstract}



\begin{keyword}
Aperiodic tilings\sep Monotile\sep Group tilings\sep Geometric tilings\sep Symbolic dynamics
\end{keyword}

\end{frontmatter}

\section{Introduction}

Tilings were originally defined as coverings of the Euclidean plane by compact tiles without overlap. 
They have been studied since the ancient Greeks, with a special emphasis on periodic tilings in which a finite domain is repeated periodically on the whole plane.

Seminal results on tilings include the classification of Archimedean tilings (tilings by regular polygons where all vertices are the same up to isometry) by Kepler~\cite{kepler} and the study of the symmetry groups of periodic tilings (crystallographic space groups) by Bieberbach~\cite{bieberbach}.

In the 60s emerged the question of \emph{aperiodic} tilesets~\cite{wang},
that is, sets of tile that can tile the whole plane but only in a non-periodic way.
Though Wang initially conjectured such tilesets did not exist, Berger~\cite{berger} soon provided a first aperiodic tileset of 20426 tiles up to translation.

\begin{figure}[p]
  \center
  \includegraphics[width=\textwidth]{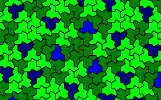}
  \includegraphics[width=\textwidth]{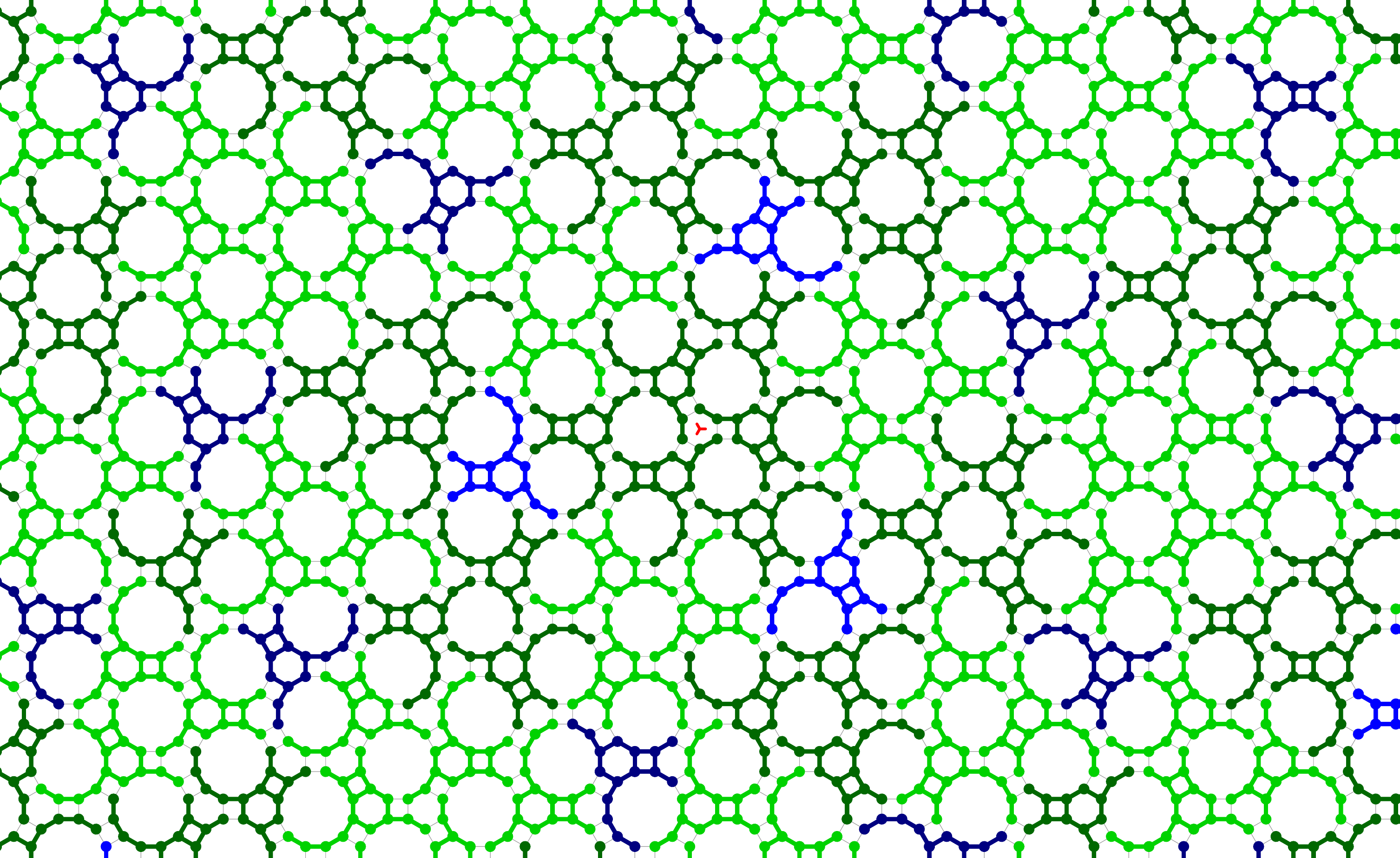}
  \caption{A Hat tiling of $\R^2$ and a corresponding \gat~tiling on group $\Gamma$.
    The colors correspond to signature of the symmetry, and the darkness correspond to the parity of angle modulo $\pi/6$.
    The red tricross is a center of $3$-fold rotational symmetry.}
  \label{fig:discrete_hat_tiling}
\end{figure}

The first simple geometric aperiodic tileset was found by Robinson in 1971~\cite{robi71}, who defined a set of 6 geometric tiles up to isometry with the shape of squares with bumps and dents, that can tile the 2-dimensional plane, but without any translational symmetry.  
This started the challenge that was taken two years later by
Penrose~\cite{Penr74}, who defined an aperiodic set of only $2$ geometric tiles.
Penrose's tiling has fivefold rotational symmetry, and, unlike Robinson's, it cannot be arranged in a square grid.

In 1992, Ammann, Grünbaum and Shephard~\cite{polyominoes} restricted the problem to polyominoes, and found an aperiodic set of $3$ polyominoes, if one allows rotations, or $8$ otherwise.
{
  The latter record lasted until 2024, when Yang and Zhang~\cite{polyo7} found an aperiodic set of $7$ polyominoes.
On the other hand, in 1991, Beauquier and Nivat~\cite{bnivat} showed, using methods from word combinatorics, that (if no rotation of the tiles is allowed) no aperiodic connected polyomino monotile exists.

This polyomino tiling problem can be expressed in terms of \emph{group tiling} of $\Z^2$:
a polyomino is indeed a finite connected subset of $\Z^2$, and tiling amounts to covering the group $\Z^2$ with translations of such subsets. 

While geometric tilings are classically considered in $\R^2$, group tilings may be defined in any group, which changes the techniques that can be applied to tackle the problem.
The problem of tiling $\Z^2$ with translations of only one tile was eventually solved in 2020 by Bhattacharya~\cite{bhattacharya}, who proved that no finite subset of $\Z^2$ (possibly disconnected in the usual Cayley graph) is an aperiodic monotile.
On the other hand, in 2023 by
Greenfeld and
Tao~\cite{gtao2} constructed an aperiodic monotile in a group of the form $\Z^2\times H$, for a well-chosen finite abelian group $H$.
Thanks to the lifting result by \cite{lift}, they deduce that such a monotile also exists for $\Z^d$, when $d$ is big enough.

Coming back to geometric tilings, the tiling by Penrose kept the minimality
record for 50 years, unless one authorizes tiles to be disconnected, to communicate through their vertices, or to have some thickness, see~\cite{socolar}.
In 2023 finally, Smith, Myers, Kaplan and
Goodman-Strauss~\cite{hat} described new aperiodic tilings with
only $1$ tile. The work is very interesting, not only because a
whole family of aperiodic monotiles was discovered (with a fundamentally original aperiodicity argument), but also because
two of these tiles (the Hat and the Turtle) are polykites and can be
embedded in a \emph{Kitegrid}, whose dual is an Archimedean
tiling (which can be seen as the Cayley graph of a group), suggesting
that the Hat and the Turtle tilings can also be seen in some sense as group tilings.

Elaborating on this idea, we generalize here the notion of polyominoes into poly-$K$, which are finite unions of copies of some given compact $K$ which tiles the plane periodically.
Polyominoes correspond to $K$ being a unit square, while other classical examples
include polyamonds ($K$ is the unit equilateral triangle) or polykites ($K$ is the Kite).
}

We then provide a toolkit to translate geometric tilings by a
poly-$K$ tile into tilings of the symmetry group of the $K$-grid.
In particular, we apply this toolkit to the Hat polykite \cite{hat} to
obtain a new aperiodic monotile (which we call \gat) in a finitely presented group $\Gamma$ which is virtually $\Z^2$, see
Figure~\ref{fig:discrete_hat_tiling}.
It can be compared to \cite[Theorem~1.4]{gtao2}: our result is somehow weaker in the sense that the extension of $\Z^2$ is not direct.
On the other hand, it is more explicit, and enjoys a nice Coxeter definition, as well as nice pictures.

Section~\ref{sec:tilings} provides terminology and definitions for geometric and group tilings.
Section~\ref{s:discr} presents our toolkit on translation from poly-$K$ tiles to group tilings.
Section~\ref{sec:hat} applies this toolkit to the Hat tile.
Section~\ref{sec:general} discusses two variations on our main result, that link geometric and group tilings outside of the poly-$K$ case.

The readers interested in abstract tilings and the link between geometric
and group tilings may skip Section~\ref{sec:hat}, while readers only
interested in the \gat\ tile can use Corollary~\ref{c:main-one-k} as a 
blackbox and focus on Section~\ref{sec:hat}.

\section{Tilings} \label{sec:tilings}
Let $\W$ be a topological space endowed with a nontrivial Borel measure $\lambda$.
Let $G$ be a group which acts by self-homeomorphisms $g$ of $\W$ that we assume \dfn{negligibility-preserving} (that is, $\lambda(A)>0\implies\lambda(g\cdot A)>0$).
A \dfn{tile} is a positive-measure subset of $\W$ (often considered up to negligible sets).
A \dfn{cotiler} for a set $\T$ of tiles is a countable subset $C\subset G\times\T$ 
such that:
\[
  \bigsqcup_{(g,T)\in C}g\cdot T\equiv_\lambda\W,\]
where $\equiv_\lambda$ is an equality and $\sqcup$ a disjoint union, both up to $\lambda$-negligible sets.
If $\T$ admits a cotiler, we say that $\T$ \dfn{tiles} $\W$ with respect to $G$.

The terminology is inspired from the case when $\T$ is a singleton $\{T\}$; in that case we will say that $T$ tiles and confuse the cotiler with a subset of $G\sim G\times\{T\}$.

The cotiler should not be confused with the \dfn{tiling} $\sett{g\cdot T}{(g,T)\in C}$.

\begin{rem}\label{r:tiles}
  We keep a very large setting because very few assumptions are needed for our results (some more will be added to the tiles and the group in some results).
  Nevertheless, one can keep in mind that natural examples arise when $\W$ is a second-countable locally compact space, $\lambda$ has full support (is positive for every nonempty open subset), and the tiles $T$ are compact, are the closure of their interior, have finite positive measure and $\lambda$-negligible boundaries.
  In this setting, {every cotiler is countable and} $T\equiv_\lambda T'$ if and only if $T=T'$.
\end{rem}

Here are some classical examples of settings.
\begin{exmp}
Relevant examples of spaces $\W$ include the following:
  \begin{itemize}
\item
  $\W=\R^d$ endowed with canonical topology, Lebesgue measure, $G=\R^d$ acting by translations or $G = \Isom^+(\R^d)\simeq\R^d\rtimes SO_d(\R)$ acting by direct isometries or $G =
 \Isom(\R^d)\simeq \R^d\rtimes O_d(\R)$ acting by isometries. 
  We are also interested in tilings with respect to some countable subgroups of these.
  It is also natural to require additional constraints over the tiles, like polygonality\ldots
\item
  Consider a countable group $\W=G$
  endowed with discrete topology, the counting measure,
  self-acting by left multiplication.
  The notion of \emph{monotileable} group~\cite{monotile} and its generalization into \emph{polytileable} group~\cite{polytile} 
  fit in this framework.
    \item
     More generally, $\W = G$ is a locally compact group endowed with the Haar measure, self-acting by left multiplication.
\item
  $\W=\H^2$, the hyperbolic plane and $G=\Isom(\H^2)\simeq \PGL_2(\R)$ or $G=\Isom^+(\H^2)\simeq \PSL_2(\R)$.
\item Tilings of half-planes like $\N\times\Z$ or $\R_+\times\R$ could be covered by a suitable generalization of our theory to monoids, which would nevertheless have the price of a complexification of many notions.
\end{itemize}\end{exmp}

The simplest examples of tilings are periodic tilings, introduced through the following formalism.
\begin{definition}\label{d:grid}
  A finite tileset $\K$ tiles with cotiler $C=G\times\K$ if, and only if,
  $G$ is countable and $\bigsqcup_{K\in\K}K$ is, up to $\lambda$-negligible sets, a fundamental domain for $G\actson\W$.
  We then say that $\K$ \dfn{yields a grid}, in the sense that:
  \[\W=\bigsqcup_{(h,K)\in G\times\K}h\cdot K.\]
\end{definition}
This includes regular tilings like the square lattice, but also some with different tiles like hexagonal-square-triangle lattice (note that $\K$ can include several tiles from the same translation class).

\begin{rem}\label{r:ct}~\begin{enumerate}
  \item Any tiling with respect to some $G$ can always be considered as a tiling with respect to the group of all negligibility-preserving self-homeomorphisms. Nevertheless, our goal is to reduce $G$ as much as possible (while still allowing tilings).
  \item\label{i:ct} If $\T$ tiles with some cotiler $C$, then a relevant group of self-homeomorphisms of $\W$ is the (countable) \dfn{cotiler group} $G_C\defeq\langle gh^{-1}\ |\ (g,T),(h,T)\in C,T\in\T\rangle$.
    Indeed, for every $(g_T)_{T\in\T}$ such that $(g_T,T)\in C$, one can see that the tileset $\sett{g_T\cdot T}{T\in\T}$ (essentially equal to $\T$, up to translating each tile) tiles with cotiler $\{(gg_T^{-1},g_T\cdot T)\ |\ (g,T)\in C\}\subseteq G_C\times \sett{g_T\cdot T}{T\in\T}$.
  \end{enumerate}\end{rem}


\subsection{Periods}
In general, if $\T$ tiles, then it admits many cotilers.
Let us denote $\ct[G](\T)$ the set of its cotilers.

$G$ acts naturally on $\ct[G](\T)$ by $g\cdot C\defeq\sett{(gh,T)}{(h,T)\in C}$.
The set $\ct[G](\T)$ can be seen as some kind of \emph{subshift of finite type}
, which means that one can check that something is indeed a cotiler by looking at bounded windows: if we encode $C$ as the configuration $x\in(\deux^\T)^G$ (where $\deux^\T$ consists of the subsets of $\T$), defined by $x_g=\sett[\T]T{(g,T)\in C}$, we can check whether $C$ is a cotiler by checking $x$ on every $g$ and an appropriate neighborhood around $g$.


A \dfn{period} for a cotiler $C$ is some $h\in G$ such that $h C=C$.
The \dfn{stabilizer} of $C$ is the 
subgroup $\St C\le G$ of its periods (see Figure~\ref{fig:periods} for an example).
A cotiler $C$ is:
\begin{itemize}
\item \dfn{weakly periodic} if $\St C$ is nontrivial.
\item \dfn{mildly periodic} if $\St C$ is infinite.
\item \dfn{strongly periodic} if $\St C\actson\W$ cocompactly.
\end{itemize}
A tileset (which admits at least a tiling) is \dfn{strongly}\resp{\dfn{mildly}, \dfn{weakly}} \dfn{aperiodic} if no cotiler is weakly\resp{mildly, strongly} periodic.
Of course, strong periodicity implies mild periodicity, which implies weak periodicity, except in the case of finite groups (where all cotilers are strongly, but not mildly, periodic), which does not interest us.

\begin{rem}\label{r:mild}~\begin{enumerate}
  \item In the case when $\W=G$ is a discrete group, this notion of strong periodicity is equivalent to the classical definition that $\St C$ has finite index in $G$, see \cite{bitar}.
  \item In the case when $G$ is virtually $\Z^2$, classical pumping arguments give that a tileset is weakly aperiodic if, and only if, it is mildly aperiodic, see \cite{ballier}.
  \end{enumerate}\end{rem}

Note also that in the case when $G$ contains other isometries than translations, (a)periodicity differs from the usual notion of (a)periodicity with respect to the subgroup of translations.

\begin{center}
\begin{figure}[htb!]
 \includegraphics[width=.3\textwidth]{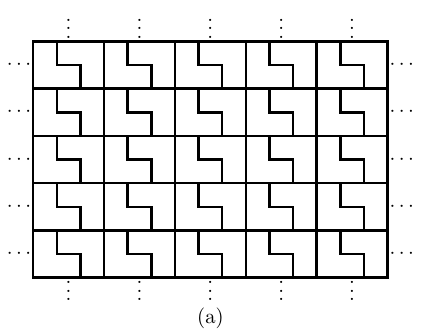}\hspace{.3cm}
 \includegraphics[width=.35\textwidth]{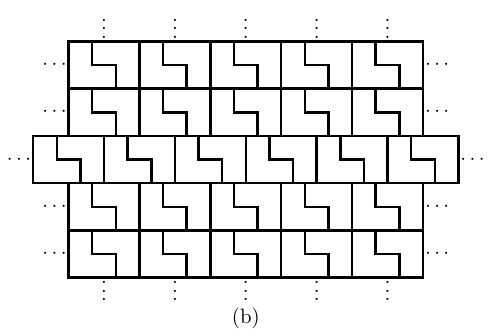}\hspace{.3cm}
 \includegraphics[width=.3\textwidth]{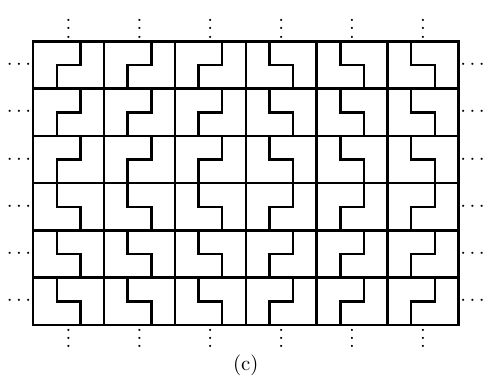}
 \caption{Here we consider a single {\bf L} shaped tile in $\mathbb{R}^2$ with its whole group of isometries, we present 3 different cotilers, 
   we denote by $t_{u}$ the translation by vector $u$, $R_2^u$, the rotation by $\pi$ at point $u$, and $\rho_h, \rho_v$ the horizontal and vertical reflections at the origin: (a) strongly periodic cotiler: $C=\{t_{(3a,2b)}\}_{(a,b)\in\mathbb{Z}^2}\cup\{R_2^{(3a,2b)}\circ t_{(3a,2b)}\}_{(a,b)\in\mathbb{Z}^2}$; (b) mildly but not strongly periodic cotiler: $\St C=\{t_{(3a,0)}\}_{a\in\mathbb{Z}}$; (c) weakly but not mildly periodic cotiler:\break$\St C=\{id,R_2^{(0,0)},\rho_h,\rho_v\}$.}\label{fig:periods}
\end{figure}
\end{center}

A classical metaquestion, in each world $\W$ (and with some constraints on $G$ and on the possible tiles), is whether there exists a finite aperiodic tileset.
In $\Z$ and $\R$, the answer is ``\emph{no}'' (folklore), and the proof carries to virtually-$\Z$ groups.
In $\R^2$ and $\Z^2$, the answer is ``\emph{yes}''.
The technique of turning colored tiles 
into shapes could be extended to prove that the answer is still ``\emph{yes}'' in $\R^d$ and $\Z^d$, for $d\ge 2$.
We are not aware of other groups for which this question has been studied (though extensive literature exists for sets of ``colored tiles''; see \cite{bitar} for a pleasant survey).

In cases where aperiodic tilesets do exist, it is then natural to look for the simplest ones; in particular, in the recent decades, the community has asked about the existence of an aperiodic monotile, that is, an aperiodic tile singleton.
In some settings no weakly aperiodic monotile exists, in particular:
\begin{itemize}
  \item in $\Z^2$ \cite{bhattacharya, gtao1};
  \item in $\R^2$ when $G$ is the group of translations and the tile is required to be a topological disk \cite{bnivat, bnivat2};
  \item in $\R^n$ when $G$ is the group of translations and the tile is required to be convex \cite{venkov, mcmullen};
  \item in $\R^2$ when $G$ is the group $\Isom(\R^2)$ of isometries of the plane and the tile is required to be a convex polygon \cite{rao}.
\end{itemize}
Nevertheless, there exist weakly aperiodic monotiles in some other settings:
\begin{itemize}
\item in $\R^2$ when $G$ is the group of isometries and the tile is connected \cite{hat};
\item in $\R^2$ when $G$ is the group of direct isometries (\cite{socolar} with a disconnected tile, or more recently \cite{spectre} with a connected tile);
\item in $\Z^2\times H$ for some finite abelian group $H$ \cite[Theorem~1.3]{gtao2};
\item in some $\Z^d$ for some large $d\in\N$ \cite[Corollary~1.5]{gtao2}.
\end{itemize}
There is a very tight boundary between group $\Z^2$ which does not admit aperiodic monotile, and some commensurable groups which do.
It is interesting to understand how small such a group can be.

\section{Discretization of poly-$\K$ tiles}\label{s:discr}

This section is devoted to the translation between two worlds: a geometric, continuous world, and a combinatorial, countable world.

\begin{rem}
  If $G$ is a group and $\K$ is any finite set, $G$ naturally acts on $G\times\K$ 
  by $g\cdot(h,K)\mapsto(gh,K)$.
  Tiling $G\times\K$ is equivalent to tiling $G$ with $\K$ layers that are synchronized (such considerations are used in \cite[Section~3]{gtao1}):
  If $\T\subset G\times\K$ is a tileset, and $K\in\K$, one can define the set $\T_K$ of 
  tiles ${\sett g{(g,K)\in T}}$, for ${T\in\T}$.
  Then the cotiler sets are related by $\ct(\T)=\bigcap_{K\in\K}\ct(\T_K)$.
\end{rem}

Let $\K$ be a finite set of tiles $K\subset\W$ that yields a grid with respect to $G$ in the sense of Definition~\ref{d:grid}, that is: $\K$ tiles $\W$ with $G$ as a cotiler for each tile. 

One says that a tile $T$ is a \dfn{poly-$\K$} if it is (up to $\lambda$-negligible sets) the disjoint union of finitely many copies of tiles from $\K$, that is,
\[T\equiv_\lambda\bigsqcup_{(h,K)\in\grp[\K](T)}h\cdot K,\]
for some finite $\grp[\K](T)\subset G\times\K$.
  For $K\in\K$, let us also denote $\grp[K](T)\defeq\grp[\K](T)\cap (G\times\{K\})$. 

\begin{thm}[Main result; poly-$\K$ version]\label{t:maink}
  Let $G$ be countable and $\K$ a finite set which yields a grid for $G\actson\W$.
  Then $\grp[\K]$ is a bijection from the set of ($\equiv_\lambda$-classes of) poly-$\K$ sets $\T$ of $\W$ onto the set of finite tilesets of
  $G\times\K$, such that the cotiler sets $\ct(\grp[\K](\T))$ and $\ct(\T)$ are equal, up to syntactically renaming each tile $T$ into $\grp[\K](\T)$.
\end{thm}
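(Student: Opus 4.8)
The plan is to prove three things in sequence: that $\grp[\K]$ is well-defined on $\equiv_\lambda$-classes and lands in finite tilesets of $G\times\K$; that it is a bijection; and that it matches cotiler sets as claimed. Throughout, the key structural fact is that, since $\K$ yields a grid, every point of $\W$ (off a negligible set) lies in exactly one translate $h\cdot K$ with $(h,K)\in G\times\K$, so $\W$ is partitioned $\lambda$-a.e.\ into the ``cells'' $\{h\cdot K : (h,K)\in G\times\K\}$, and this partition is $G$-equivariant. First I would record this partition explicitly: for a.e.\ $w\in\W$ there is a unique $(h,K)$ with $w\in h\cdot K$, giving a measurable ``cell map'' $\W\to G\times\K$ whose fibers are the cells, each of positive measure. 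Since $G$ acts by negligibility-preserving homeomorphisms, any tile intersects a cell $h\cdot K$ either in a negligible set or in a positive-measure set, and I will want the sharper statement that a poly-$\K$ tile, by definition, meets each cell either negligibly or (up to $\lambda$) in the whole cell — this is exactly what $T\equiv_\lambda\bigsqcup_{(h,K)\in\grp[\K](T)}h\cdot K$ encodes.

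For \emph{well-definedness and the target}: if $T\equiv_\lambda T'$ are both poly-$\K$, then $\bigsqcup_{\grp[\K](T)}h\cdot K\equiv_\lambda\bigsqcup_{\grp[\K](T')}h\cdot K$; intersecting both sides with a fixed cell $h_0\cdot K_0$ and using that distinct cells overlap only negligibly and each cell has positive measure, we get $(h_0,K_0)\in\grp[\K](T)\iff(h_0,K_0)\in\grp[\K](T')$, hence $\grp[\K](T)=\grp[\K](T')$. The same argument shows $\grp[\K]$ is \emph{injective}. For \emph{surjectivity}, given any finite $F\subset G\times\K$, the set $T_F\defeq\bigsqcup_{(h,K)\in F}h\cdot K$ is a positive-measure subset of $\W$ (a finite union of tiles), hence a tile, it is manifestly poly-$\K$, and $\grp[\K](T_F)=F$ by the same cell-by-cell disjointness argument. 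For \emph{``finite tileset'' of $G\times\K$}: a tileset of $G\times\K$ means a finite subset $\T$ of $\deux^{G\times\K}$ (finite subsets of $G\times\K$) together with the requirement that it admits a cotiler, i.e.\ that $G\times\K$ can be partitioned by translates of its elements; I need to check that $\T$ being a tileset of $\W$ (of poly-$\K$ tiles) is equivalent to $\grp[\K](\T)$ being a tileset of $G\times\K$, which will drop out of the cotiler identification below. Since $\grp[\K]$ sends each poly-$\K$ tile to a finite subset of $G\times\K$ and is injective, it maps a finite set $\T$ of such tiles to a finite set of finite subsets of $G\times\K$.

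For the \emph{cotiler correspondence} — which is the heart of the matter — fix a poly-$\K$ tileset $\T$ and write $F_T\defeq\grp[\K](T)$ for $T\in\T$, so $\grp[\K](\T)=\{F_T:T\in\T\}$. A countable $C\subseteq G\times\T$ is a cotiler for $\T$ iff $\bigsqcup_{(g,T)\in C}g\cdot T\equiv_\lambda\W$. Now $g\cdot T\equiv_\lambda\bigsqcup_{(h,K)\in F_T}(gh)\cdot K$, using that the action is negligibility-preserving (so $g\cdot$ commutes with $\equiv_\lambda$) and a homeomorphism (so $g\cdot$ commutes with $\sqcup$); in the language of $G\times\K$, $g\cdot T$ corresponds to the translate $g\cdot F_T=\{(gh,K):(h,K)\in F_T\}$. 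Thus $\bigsqcup_{(g,T)\in C}g\cdot T\equiv_\lambda\W$ holds in $\W$ iff $\bigsqcup_{(g,T)\in C}g\cdot F_T$ is a partition of $G\times\K$ — here the ``only if'' direction uses injectivity of the cell map (distinct cells are $\lambda$-a.e.\ disjoint and of positive measure, so a.e.-cover with a.e.-disjointness downstairs forces an honest partition of the index set $G\times\K$), and the ``if'' direction is immediate from the partition of $\W$ into cells. Therefore $C\in\ct(\T)$ iff $C'\defeq\{(g,F_T):(g,T)\in C\}\in\ct(\grp[\K](\T))$, and $C\mapsto C'$ is precisely ``syntactically renaming each $T$ into $\grp[\K](T)=F_T$'' (it is a bijection $G\times\T\to G\times\grp[\K](\T)$ because $T\mapsto F_T$ is injective). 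In particular $\ct(\grp[\K](\T))$ is nonempty iff $\ct(\T)$ is, which finishes the ``is a tileset'' equivalence left open above.

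The main obstacle, and the only place requiring genuine care rather than bookkeeping, is the passage between ``$\equiv_\lambda$ equality in $\W$'' and ``honest set equality in $G\times\K$'': one must be sure that testing against the cells $h\cdot K$ is legitimate, i.e.\ that each cell has positive measure and that two distinct cells are $\lambda$-negligibly disjoint — both are part of the hypothesis that $\K$ \emph{yields a grid} (Definition~\ref{d:grid}: $\bigsqcup_{K\in\K}K$ is a fundamental domain up to negligible sets, and $G$ is countable so the a.e.-disjointness of translates is preserved under the countable union). A secondary subtlety is that a poly-$\K$ tile might a priori be written with different finite sets $\grp[\K](T)$; the definition fixes this by saying $T\equiv_\lambda\bigsqcup h\cdot K$ over \emph{some} finite set, and the cell argument shows that set is then forced to be $\{(h,K):\lambda(T\cap h\cdot K)>0\}$, hence unique. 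Once these two points are nailed down, everything else is the routine translation described above.
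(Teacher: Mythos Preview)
Your argument is correct and follows essentially the same route as the paper's proof: both reduce the cotiler correspondence to the observation that the cells $h\cdot K$ are pairwise $\lambda$-negligibly disjoint with positive measure, so an $\equiv_\lambda$-equality of unions of cells in $\W$ holds iff the corresponding index sets in $G\times\K$ are literally equal. One minor point: in this paper a \emph{tileset} is simply a set of tiles (positive-measure subsets), not one that is required to admit a cotiler, so your extra check that $\ct(\T)\ne\emptyset\iff\ct(\grp[\K](\T))\ne\emptyset$ is true but not needed for the bijection claim.
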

Formally, the last equality means 
Consequently, $\W$ admits (weakly, mildly, strongly) aperiodic poly-$\K$ sets (for the action of $G$) if, and only if, $G\times\K$ admits (weakly, mildly, strongly) aperiodic sets of finite tiles, with the same number of tiles.

Let us restate the particular case when $\K$ is a singleton $\{K\}$ (in that case we talk about poly-$K$) and $\T$ is a singleton $\{T\}$.
\begin{cor}\label{c:main-one-k}
  Let $G$ be countable and $K\subset\W$ yield a grid for $G\actson\W$.
  Then $\grp[K]$ is a bijection from the set of ($\equiv_\lambda$-classes of) poly-$K$ $T$ of $\W$ onto the set of finite tilesets of $G$, such that $\ct(\grp[K](T))=\ct(T)$, up to syntactically renaming $T$ into $\grp[\K](T)$.
\end{cor}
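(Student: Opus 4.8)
The plan is to obtain the corollary as the restriction of Theorem~\ref{t:maink} to the case where both $\K$ and $\T$ are singletons. First I would fix, once and for all, the bookkeeping identifications that make the two statements line up: when $\K=\{K\}$, there is a canonical bijection $G\times\{K\}\cong G$, under which a finite tileset of $G\times\{K\}$ is exactly a finite set of finite subsets of $G$ (a finite tileset of $G$), the poly-$\K$ sets are exactly the poly-$K$ sets, and for a poly-$K$ tile $T$ the finite set $\grp[\K](T)\subseteq G\times\{K\}$ becomes $\grp[K](T)\subseteq G$. With this dictionary in place, Theorem~\ref{t:maink} already provides a bijection $\grp[\K]$ from $\equiv_\lambda$-classes of poly-$K$ sets onto finite tilesets of $G$ whose cotiler sets agree up to renaming; what remains is to check that this bijection matches \emph{individual} poly-$K$ tiles with \emph{single-tile} tilesets of $G$ on both sides, and then to read off the cotiler identity in that case.

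For the bijection part, I would first note that $\grp[\K]$ is injective already on individual tiles: two poly-$K$ tiles with the same image are $\equiv_\lambda$-equal, since $\{h\cdot K\mid h\in G\}$ partitions $\W$ up to negligible sets and a poly-$K$ tile is, by definition, a union of whole cells of this partition — concretely, $\grp[K](T)=\sett{h\in G}{\lambda(T\cap h\cdot K)>0}$. Hence $\card{\grp[\K](\T)}=\card{\T}$ for every poly-$K$ set $\T$, so $\T$ is a singleton exactly when $\grp[\K](\T)$ is. Conversely, any prescribed finite $S\subseteq G$ is realized: $T\defeq\bigsqcup_{h\in S}h\cdot K$ is a poly-$K$ tile (the cells $h\cdot K$ are pairwise $\lambda$-disjoint because $K$ yields a grid) with $\grp[K](T)=S$. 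So $\grp[K]$ restricts to a bijection between $\equiv_\lambda$-classes of poly-$K$ tiles $T$ and single-tile tilesets $\{\grp[K](T)\}$ of $G$.

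It then remains to specialize the cotiler statement. When $\T=\{T\}$, a cotiler $C\in\ct(\{T\})$ is — as explained right after the definition of a cotiler — simply a subset of $G\sim G\times\{T\}$, and likewise a cotiler of the single tile $\grp[K](T)$ is a subset of $G\sim G\times\{\grp[K](T)\}$; the renaming $T\mapsto\grp[K](T)$ is the identity on the underlying group elements, so the formula $\ct(\grp[\K](\T))=\sett{\sett{(g,\grp[\K](T))}{(g,T)\in C}}{C\in\ct(\T)}$ of Theorem~\ref{t:maink} collapses to $\ct(\grp[K](T))=\ct(T)$ as subsets of $2^G$. Alternatively, one could reprove the corollary directly by repeating the argument of Theorem~\ref{t:maink} in this much simpler setting, where no synchronization between different layers comes into play.

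The only genuine difficulty is therefore the bookkeeping: verifying that restricting the Theorem~\ref{t:maink} bijection to the distinguished subclasses (single-tile tilesets on each side) is legitimate — which is precisely the cardinality-preservation noted above — and that the composition of the ``syntactic renaming'' of Theorem~\ref{t:maink} with the built-in identifications $G\times\{K\}\cong G$ and $\{T\}\leftrightarrow T$ is the identity on $G$. Everything else is a transcription of an already-proved result, and I do not expect any new idea to be needed.
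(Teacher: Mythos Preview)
Your approach is exactly the paper's: the corollary is introduced there as a direct restatement of Theorem~\ref{t:maink} in the special case $\K=\{K\}$, $\T=\{T\}$, and the paper provides no separate proof for it at all. Your careful bookkeeping (identifying $G\times\{K\}$ with $G$, checking that singletons go to singletons, and collapsing the renaming) is more explicit than what the paper writes, but it is precisely the intended specialization.
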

\begin{proof}[Proof of Theorem~\ref{t:maink}]
  The fact that $\grp[\K]$ is a bijection comes directly from the definition and that $h\cdot K$ and $h'\cdot K'$ do not intersect if $(h,K)\ne(h',K')$.
   
  $C$ is cotiler for $\T$ if, and only if:
  \begin{align*}
    &\iff&\W&\equiv_\lambda\bigsqcup_{(g,T)\in C}g\cdot T&&\text{ (definition of cotiler)}\\
    &\iff&\W&\equiv_\lambda\bigsqcup_{(g,T)\in C}\bigsqcup_{(h,K)\in\grp[\K](T)}g h\cdot K&&\text{ ($T$ is poly-$\K$)}\\
    &\iff&\bigsqcup_{(h,K)\in G\times\K}h\cdot K&\equiv_\lambda\bigsqcup_{(g,T)\in C}\bigsqcup_{(h,K)\in\grp[\K](T)}g h\cdot K&&\text{ ($\K$ yields a grid)}\\
    &\iff&\bigsqcup_{(h,K)\in G\times\K}\{(h,K)\}&=\bigsqcup_{(g,T)\in C}\bigsqcup_{(h,K)\in\grp[\K](T)}\{(g h,K)\}&&\text{($h\cdot K\cap h'\cdot K'\equiv_\lambda\emptyset$
)}\\
    &\iff&G\times\K&=\bigsqcup_{(g,T)\in C}g\cdot\grp[\K](T),
  \end{align*}
  which means that $\sett{(g,\grp[\K](T))}{(g,T)\in C}$ is a cotiler for tile $\grp[\K](T)$.
\end{proof}

\section{Application to the Hat}\label{sec:hat}

\subsection{The Hat}
The Hat is a polykite tile, hence we first define the kite and Kitegrid.

\begin{definition}[Kite and Kitegrid]
  We call \dfn{Kite} the quadrilateral from Figure~\ref{fig:kite-def}, which we consider as a compact subset of $\R^2$.
  We call \dfn{Kitegrid} the periodic Kite tiling which consists of the superposition of a hexagonal grid (with sides 2) and a triangular grid (with sides $2\sqrt{3}$), superimposed in such a way that the centers of the hexagons are the vertices of the triangles, see Figure~\ref{fig:kite-def}.

  We also call \dfn{Semikite} the right triangle obtained when bisecting the Kite along its long diagonal, and \dfn{Semikitegrid} the periodic Semikite tiling obtained from bisecting the Kites of the Kitegrid into Semikites, see Figure~\ref{fig:laves_archimedean} (right).
\end{definition}

\begin{figure}[htp]
  \center
  \includegraphics[width=0.4\textwidth]{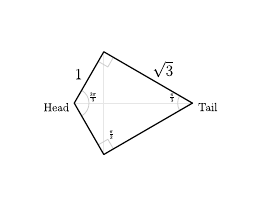} \hspace*{1cm}
  \includegraphics[width=0.4\textwidth]{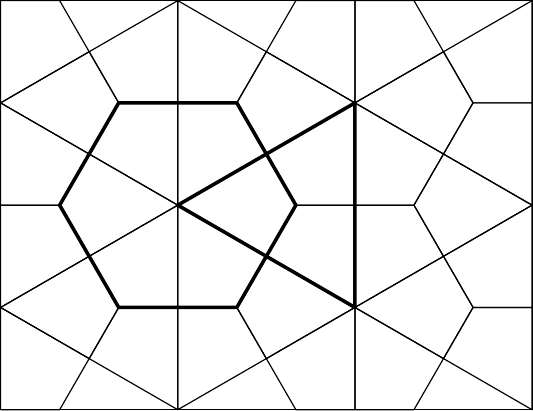}
  \caption{The Kite (left) and Kitegrid (right).}
  \label{fig:kite-def}
\end{figure}

We call \dfn{symmetry} of a tiling $\tiling$ an isometry of the plane that preserves the tiling, that is, $h\in \Isom(\R^2)$ is a symmetry of $\tiling = \{ g\cdot T | (g,T)\in C\}$ when $h\tiling=\tiling$. Denote $\sym(\tiling)$ the group of the symmetries of $\tiling$.
Similarly, we call \dfn{symmetry} of a tileset $\T$ an isometry of the plane that sends some tile $T\in\T$ to some tile $T'\in\T$ (possibly the same tile).

Let us emphasize that $h$ is a symmetry of a tiling $\tiling$ with cotiler $C$ when for  any $(g,T)\in C$ there exists $(g',T)\in C$ such that $h g\cdot T = g'\cdot T$ (here the equality is as subsets of $\W$), whereas $h$ is a period of $C$ when for any $(g,T) \in C$ there exists $(g',T)\in C$ such that $(h g,T)=(g',T)$, that is, $h g=g'$.
In particular, the symmetry group $\sym(\tiling)$ includes the stabilizer $\Stab(C)$ of the cotiler as a subgroup.
These two groups are not always equal: in particular for the Kitegrid, the stabilizer contains only direct isometries of the plane whereas the symmetry group also contains reflections.

However, if the tileset has no non-trivial symmetry itself, then $\Stab(C)=\sym(\tiling)$.



An \dfn{Archimedean tiling} is a tiling of $\R^2$ by regular polygons such that every vertex has the same configuration (the same sequence of polygons when reading around the vertex), see Figure~\ref{fig:laves_archimedean} in black.
An Archimedean tiling is characterized by this vertex configuration; for example the $4.3.4.6$ Archimedean tiling is a tiling by regular triangles, squares and hexagons such that around each vertex there is a square, a triangle, another square and a hexagon, see Figure~\ref{fig:laves_archimedean} (left).

The dual tilings of Archimedean tilings are called \dfn{Laves tiling}, see Figure~\ref{fig:laves_archimedean} in grey.
Laves tilings are characterized by the vertex configuration of their dual Archimedean tiling, and have only one tile up to isometry.
The Kitegrid is the $4.3.4.6$ Laves tiling, and the Semikitegrid is the $4.6.12$ Laves tiling.

\begin{figure}[htp]
  \center
  \includegraphics[width=0.4\textwidth]{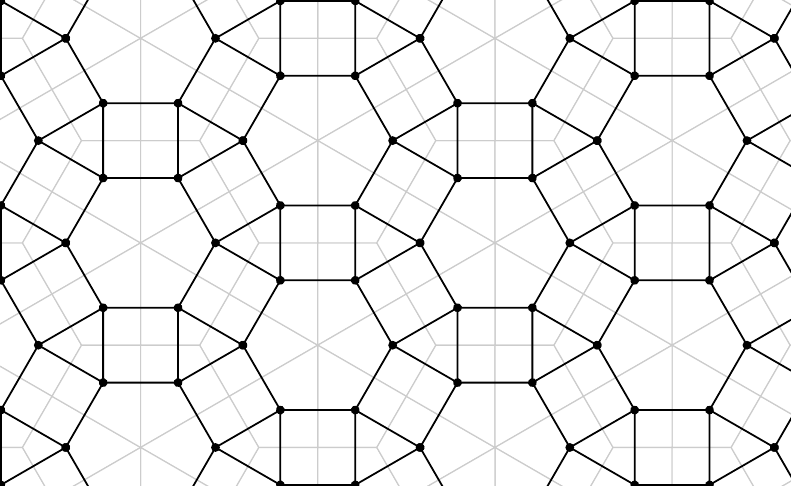} \hspace*{1cm}
  \includegraphics[width=0.4\textwidth]{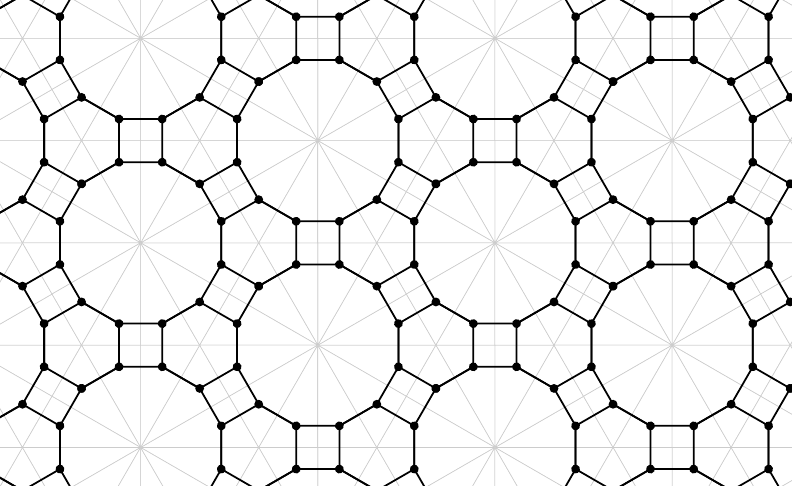}
  \caption{The Kitegrid (left) and Semikitegrid (right) in light grey, with their dual graph.\break
    The dual graphs induce the Archimedean tilings of vertex configurations $3.4.6.4$ (left) and $4.6.12$ (right).}
  \label{fig:laves_archimedean}
\end{figure}

\begin{definition}[Hat]
We call \dfn{Hat} the 13-gon tile represented in Figure~\ref{fig:hat-def} as a union of 8 kites.
\end{definition}

\begin{figure}[htp]
  \center
  \includegraphics[width=0.3\textwidth]{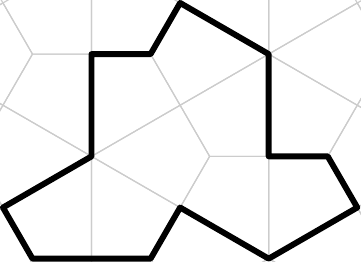}
  \caption{The Hat tile as a polykite.}
  \label{fig:hat-def}
\end{figure}

Note that we can either see the Hat as a 13-gon (6 edges of length $\sqrt{3}$, 6 edges of length $1$ and 1 edge of length $2$) or as a 14-gon with two adjacent colinear edges (6 edges of length $\sqrt{3}$ and 8 edges of length $1$).

We now consider tilings of $\W = \R^2$ with respect to the group $G$ of isometries of $\R^2$.

Among the main known results on the Hat tilings, we now present two that are useful in our construction.
\begin{thm}[Hat tilings \cite{hat}]~
  \label{thm:cite-hat}
  \begin{enumerate}
  \item\label{i:aperhat} The Hat is an aperiodic geometric monotile \cite[Theorem 1.1]{hat};
    in our terminology: the Hat tiles $\R^2$ with respect to $G=\Isom(\R^2)$;
    and no stabilizer of a cotiler for the Hat contains a translation,
    in particular it is weakly aperiodic.
  \item\label{i:kitehat} Decomposing any Hat tiling in kites yields the Kitegrid \cite[Lemma A.6]{hat};
    in our terminology: any cotiler group of the Hat is included in the symmetry group $\Gamma$ of the Kitegrid.
  \end{enumerate}
\end{thm}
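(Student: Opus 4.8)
Both items are results of \cite{hat}; the plan is to phrase them in the vocabulary of Section~\ref{sec:tilings}, so the proof is really a recollection of the structure of the arguments of \cite{hat} together with a check that they give exactly the two stated facts.

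For item~\ref{i:aperhat}, existence is immediate: \cite{hat} exhibits an explicit Hat tiling of $\R^2$ (obtained by iterating the metatile substitution), which is nothing but a cotiler for $\{\text{Hat}\}$ with respect to $\Isom(\R^2)$. The content is that no cotiler has a translation as a period. I would recall the four \emph{metatiles} $H,T,P,F$ of \cite{hat}, each a small cluster of Hats, together with the combinatorial substitution rule replacing a metatile by a cluster of metatiles, and then establish the two key facts of \cite{hat}: (i) every Hat tiling can be partitioned, in a \emph{unique} way, into metatiles (recognizability at the first level); (ii) iterating, one obtains a unique hierarchy of \emph{supertiles} of every level $n$, whose diameters tend to infinity. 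From (i)--(ii) the absence of a translation period is the standard argument: a translation $t$ preserving a cotiler $C$ would, by uniqueness, preserve the whole hierarchy, hence map each level-$n$ supertile to a level-$n$ supertile; but a nonzero translation cannot map a bounded region onto a disjoint region of the same shape once the diameter of that shape exceeds $|t|$, so $t=0$. This is precisely weak aperiodicity with respect to $\Isom(\R^2)$ in our terminology. (An alternative route is the second, ``Tile$(a,b)$'' argument of \cite{hat}, but the substitution route is the one that adapts most transparently.)

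For item~\ref{i:kitehat}, I would argue by local propagation. Fix the decomposition of one Hat into its $8$ kites as in Figure~\ref{fig:hat-def}, and record how the boundary of that Hat sits inside the ambient Kitegrid. Then enumerate, up to the symmetries of the Hat, the ways two Hats can be placed so as to share part of their boundaries in an admissible Hat tiling — these are the finitely many adjacency types classified in \cite{hat} — and verify, case by case, that in each of them the kite decompositions of the two Hats are restrictions of one and the same copy of the Kitegrid. Since in any tiling any two tiles are joined by a finite chain of such adjacencies, the copy of the Kitegrid propagates over the whole tiling: the kite decomposition of any Hat tiling is a single copy of the Kitegrid. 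Consequently, for any cotiler $C$ of the Hat, every element of the cotiler group $G_C$ maps this Kitegrid to itself, so $G_C\le\Stab(\text{Kitegrid})=\Gamma$; this is the translation of \cite[Lemma~A.6]{hat}.

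The main obstacle in both items is the same finite-but-delicate case analysis already carried out in \cite{hat}: for item~\ref{i:aperhat} it is the recognizability statement (i), i.e.\ showing that the clustering of Hats into metatiles is forced and unambiguous, which requires inspecting all legal neighbourhoods of a Hat; for item~\ref{i:kitehat} it is the exhaustive list of admissible two-Hat adjacencies. Since both analyses are complete in \cite{hat}, in this paper we simply invoke \cite[Theorem~1.1]{hat} and \cite[Lemma~A.6]{hat} and perform only the routine dictionary translation to the present setting.
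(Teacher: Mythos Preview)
The paper gives no proof for this theorem at all: it is stated purely as a citation of \cite[Theorem~1.1]{hat} and \cite[Lemma~A.6]{hat}, and the text immediately moves on. Your proposal correctly identifies this and ends by invoking exactly those two references; the intervening sketches of the substitution/recognizability argument and the local-propagation argument are accurate summaries of \cite{hat} but are additional exposition that the present paper deliberately omits.
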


\begin{rem}
  Though it is not formally stated in the original paper, the stabilizer of the Hat is finite, so that it is mildly aperiodic.
  This comes from the fact that $\Gamma$ is virtually $\Z^2$ (see Section~\ref{subsec:group_formalism}), thus any infinite subgroup contains a translation (see Remark~\ref{r:mild}).
\end{rem}

Section~\ref{subsec:group_formalism} presents the symmetry group $\Gamma$ of the Kitegrid;
Section~\ref{subsec:hat_symmetries} describes the possible stabilizers of Hat tilings; and Section~\ref{subsec:discretehat} introduces the \gat, a mildly aperiodic monotile in $\Gamma$ derived from the Hat.

Sections~\ref{subsec:group_formalism} and~\ref{subsec:hat_symmetries} are mostly independent.

\subsection{The symmetry group of the Kitegrid}
\label{subsec:group_formalism}

In this Section, we present the symmetry group of the Kitegrid, which we denote $\Gamma$. The readers who want to avoid some group formalism can refer to the Cayley graph from Figure~\ref{fig:gamma_cayley}.

Let $R_6$ be the rotation of angle $\pi/3$ centered at the tail of
(any fixed translate of) the Kite and $R_3$ be the rotation of angle $2\pi/3$ centered at
the head of the Kite. Let $\Gamma^{+}\defeq\langle R_6,R_3\rangle$ be the
subgroup of $\Isom^+(\R^2)$ (the group of orientation-preserving isometries of $\R^2$) generated by these two
rotations. $\Gamma^{+}$ is the cotiler of the Kite $K$ yielding the
Kitegrid, or equivalently, $K$ is a fundamental domain for the action of $\Gamma^{+}$ (up to $\lambda$-negligible sets):
\[\W=\R^2\equiv_\lambda\bigsqcup_{g\in\Gamma^{+}}g\cdot K.\]
We remark that $t_1=R_6^{-1}R_3R_6^{-1}$ and $t_2=R_6^{-2}R_3$ are two
translations of the plane that generate a lattice
$L=\langle t_1,t_2\rangle\simeq\Z^2$. The regular hexagon
$\bigsqcup_{i=0}^5R_6^i\cdot K$ is a fundamental domain for the group $L$
acting on the plane.

We also remark that the equilateral triangle
$\bigsqcup_{i=0}^3R_3^i\cdot K$ is a fundamental domain for the action of
$\Gamma'
\defeq L\rtimes\langle R_6^3\rangle=\langle t_1,t_2,R_6^3\rangle\simeq\Z^2\rtimes\Z/2\Z$.

For the sake of completeness, we remark that $\Gamma^{+} = L\rtimes\langle R_6\rangle=\Gamma'\rtimes\langle R_6^2\rangle=\Gamma'\rtimes\langle R_3\rangle\simeq\Z^2\rtimes\Z/6\Z$.

We now add the reflection $\beta$ along the axis of the Kite going from the tail to the head.
We get a group $\Gamma\defeq \Gamma^{+}\rtimes\langle\beta\rangle=\langle R_6,R_3,\beta\rangle$
which is the cotiler of the Semikite $K_+$ yielding the Semikitegrid;
equivalently, $K_+$ is a fundamental domain for the action of $\Gamma$ (up to $\lambda$-negligible sets).
The reflections $\alpha=R_6^{-1}\circ\beta$ and $\gamma=R_3\circ\beta$ along the other two sides of the Semikite (see Figure~\ref{fig:gamma-symmetries}) are also in the group $\Gamma$.

\begin{rem}
  The group $\Gamma^{+}$ of the Kitegrid acts freely and transitively on the vertices of the Archimedean tiling 4.3.4.6;
  it acts freely on edges, with two orbits: one orbit for the sides of the hexagons, one orbit for the sides of the triangles.
 Labelling the edges around hexagons by $R_6$ and the edges around triangles by $R_3$, we get that the $1$-skeleton of the 4.3.4.6 Archimedean tiling is the Cayley graph of $\Gamma^{+}$ with respect to generators $\{R_6,R_3\}$.

 The group $\Gamma$ of the Semikitegrid acts freely and transitively on the vertices of the Archimedean tiling 4.6.12.
 Labelling the edges of the Archimedean tiling 4.6.12 by $\alpha$, $\beta$ and $\gamma$ according to the edge of the Semikitegrid that they cross, we get that the $1$-skeleton of the Archimedean tiling 4.6.12 is the Cayley graph of $\Gamma$ with respect to generators $\{\alpha,\beta,\gamma\}$ (with non-oriented edges, because $\alpha$, $\beta$ and $\gamma$ are involutions).
\end{rem}


The symmetry group $\Gamma$ of the Semikitegrid has presentation:
\[\Gamma=\langle\alpha,\beta,\gamma|\alpha^2,\beta^2,\gamma^2,(\alpha\beta)^6,(\beta\gamma)^3,(\alpha\gamma)^2\rangle,\]
which is the presentation of a Coxeter group.

Remark that $\Gamma$ is also the symmetry group of the Kitegrid. 

Its index-$2$ subgroup $\Gamma^{+}$ has presentation $\langle R_3,R_6|R_3^3,R_6^6,(R_3R_6)^2\rangle$
with $R_3=\gamma\beta$ and $R_6=\beta\alpha$.

Figure~\ref{fig:gamma_cayley} shows the Cayley graph of the group $\Gamma$.

\begin{figure}[htp]
  \center
  \includegraphics[width=0.6\textwidth]{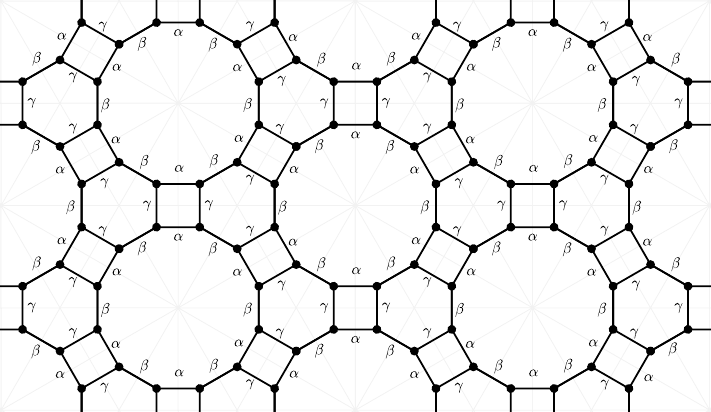}
  \caption{ The Cayley graph of $\Gamma$. 
  As the generators $\alpha$, $\beta$ and $\gamma$ are involutions, the edges are not directed.}
  \label{fig:gamma_cayley}
\end{figure}

In two senses, these groups are not far from $\Z^2$.
  \begin{rem}
    $\Gamma^{+}$ and $\Gamma$ are virtually $\Z^2$ and both admit planar Cayley graphs
    (see Figures~\ref{fig:laves_archimedean} and~\ref{fig:gamma_cayley}). 
  \end{rem}

  \begin{figure}[htp]
    \center
    \includegraphics[width=0.5\textwidth]{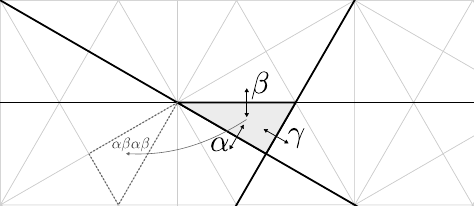}
    \caption{The symmetries of the Semikitegrid.\\
      Any semikite in the tiling is the image of the grey semikite by a composition of the reflections $\alpha$, $\beta$ and $\gamma$.\break
    In particular the semikite in dotted lines is the image of the grey semikite by $\alpha\beta\alpha\beta$.}
    \label{fig:gamma-symmetries}
  \end{figure}

\subsection{Stabilizers of Hat tilings}
\label{subsec:hat_symmetries}

In this section, we study symmetries and periods of the Hat tilings.
We chose to make these statements and proofs as independent as possible from the formalism of Section~\ref{s:discr} so that a reader might choose to avoid most of the group-action formalism.

Recall that since the Hat tiles has no symmetries, for any Hat tiling $\tiling$ with cotiler $C \subset \Isom(\R^2)$ we have $\Stab(C)=\sym(\tiling)$.

Recall that $R_3$ is the rotation of angle $\tfrac{2\pi}{3}$ around the origin head of the Kite, that we assume to be on the origin $0\in\R^2$.

\begin{prop}\label{p:aperhat}
  The symmetry group of any Hat tiling is either $\{\id\}$ or conjugated to $\{\id,R_3,R_3^2\}$.  
\end{prop}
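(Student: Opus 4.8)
\section*{Proof proposal}

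The plan is to leverage Theorem~\ref{thm:cite-hat}: part~\eqref{i:aperhat} tells us the symmetry group $\sym(\tiling)=\Stab(C)$ contains no translation, and part~\eqref{i:kitehat} tells us that $\Stab(C)\le\Gamma$, the symmetry group of the Kitegrid. So the whole argument reduces to a finite group-theoretic computation inside $\Gamma$: \emph{which subgroups of $\Gamma$ are torsion (contain no translation)?} Since $\Gamma$ is virtually $\Z^2$, with $L=\langle t_1,t_2\rangle\simeq\Z^2$ a finite-index normal subgroup (the translation lattice), any translation-free subgroup $H\le\Gamma$ must intersect $L$ trivially, hence inject into the finite quotient $\Gamma/L$. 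First I would identify $\Gamma/L$: from the semidirect-product descriptions in Section~\ref{subsec:group_formalism}, $\Gamma=L\rtimes D$ where $D$ is the point group, a dihedral group of order $12$ (generated by the order-$6$ rotation $R_6$ and a reflection). So $H$ embeds in $D_{12}$, and in particular $H$ is cyclic of order dividing $6$, or dihedral.

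Next I would rule out everything except $\{\id\}$ and the conjugates of $\langle R_3\rangle$. The key geometric fact is that an element of $\Gamma$ that is a nontrivial rotation fixes exactly one point of $\R^2$, and two such elements generate a translation unless their fixed points coincide; a reflection in $\Gamma$ together with any rotation of order $>2$ about a point off the reflection axis also generates a translation (or more precisely a group containing a glide/translation). Concretely: (i) if $H$ contains a reflection, I would show $H$ must then contain a translation — a finite group generated by reflections and rotations acting on $\R^2$ with no translations is a point group, i.e.\ all its elements share a common fixed point; but one can check (using the explicit positions of rotation centers in the Kitegrid, or a short covering/commutator computation) that no point of $\R^2$ is simultaneously a center of a reflection axis \emph{and} of the requisite rotations of $\Gamma$, forcing a translation to appear — so $H$ is rotation-only, hence cyclic. (ii) If $H=\langle\rho\rangle$ is cyclic generated by a rotation of order $n\in\{2,4,6\}$: I would exclude $n=4$ immediately since $\Gamma$ has no order-$4$ elements (the torsion in $D_{12}$ has orders $1,2,3,6$; all honest rotations in $\Gamma$ have order $1,2,3$ or $6$), and then exclude the remaining "extra" orders. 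An order-$6$ rotation center in the Kitegrid is a hexagon center (a tail of a Kite); an order-$2$ rotation center not coming from an order-$6$ one would be an edge-midpoint-type center. For each such candidate center $p$, I would invoke Theorem~\ref{thm:cite-hat}\eqref{i:aperhat} again in sharpened form, or argue directly: a $\pi$-rotation or a $\pi/3$-rotation of a Hat tiling about $p$ is impossible because the Hat tile is chiral (it and its mirror image cannot both appear compatibly around a $2$-fold center) — this is the place where one actually uses something specific about the Hat and not just about $\Gamma$. That leaves only $n=3$, i.e.\ $H$ conjugate to $\langle R_3\rangle=\{\id,R_3,R_3^2\}$.

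Finally I would exhibit that the possibility $H\cong\langle R_3\rangle$ is genuinely realized — i.e.\ the Proposition's dichotomy is sharp — by pointing to the Hat tiling in Figure~\ref{fig:discrete_hat_tiling}, which has a center of $3$-fold symmetry (the red tricross), so both cases occur. Strictly, realizability is not needed for the stated "either\ldots or\ldots", but it is worth a sentence.

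The main obstacle is step (i)--(ii): translating "no translations in $H$" into "$H$ fixes a common point and that point can only be a $3$-fold center" rigorously. The clean way is: a finite subgroup of $\Isom(\R^2)$ always fixes a point (average an orbit / Cartan fixed-point), so a \emph{finite} translation-free $H\le\Gamma$ is conjugate into $\Stab_\Gamma(p)$ for some $p\in\R^2$; then one enumerates the finitely many conjugacy classes of point stabilizers in $\Gamma$ — from the Coxeter/semidirect structure these are (up to conjugacy) the stabilizer of a hexagon center ($\cong\langle R_6\rangle\rtimes\langle\beta\rangle$, dihedral of order $12$), of a triangle center ($\cong\langle R_3\rangle\rtimes\langle\beta\rangle$, dihedral of order $6$), of an edge midpoint ($\cong\mathbb Z/2$), and the trivial one — and checks which of their subgroups survive as symmetry groups of an \emph{actual} Hat tiling, using chirality of the Hat to kill all the reflection-containing and the order-$2$ and order-$6$ cases. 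The reader who wants to avoid group formalism can instead run the same case analysis directly on the list of rotation/reflection centers visible in the Kitegrid picture; either way the content is a bounded check, and the only genuinely Hat-specific input is that the Hat is a chiral tile whose tilings therefore admit no improper symmetry and no $2$-fold or $6$-fold rotational symmetry.
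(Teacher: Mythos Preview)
Your reduction is sound and a bit more structured than the paper's: you pass to a finite subgroup of $\Gamma$ via the translation lattice $L$, then use the Cartan fixed-point theorem to conjugate into a point stabilizer and enumerate those. The paper instead runs directly through the isometry types in $\Gamma$ (translation, reflection, glide reflection, order-$6$ rotation) and excludes each by an ad hoc geometric argument. Both routes land on the same short list of candidates.

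The genuine gap is your exclusion step, where you invoke ``chirality of the Hat'' to kill reflections and the order-$2$/$6$ rotations. This fails on both fronts. For reflections: the Hat tile is chiral, but every Hat tiling necessarily contains both the Hat and its mirror (the Hat does not tile by direct isometries alone), so nothing a priori forbids a reflection of the \emph{tiling} that exchanges the two chiral classes; chirality of the tile does not imply absence of improper symmetries of the tiling. Your alternative argument in~(i) also does not exclude the subgroup $\{\id,\sigma\}$ generated by a single reflection, which is finite, translation-free, and fixes a whole line rather than a single point shared with a rotation center. For order-$2$ and order-$6$ rotations: rotations are direct isometries and preserve handedness, so chirality is simply irrelevant there.

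The paper supplies exactly the local obstructions you are missing. For reflections: since the Hat has no internal mirror, a reflection axis of the tiling cannot meet any tile interior and must therefore be a full line of the Kitegrid; a direct check (Figure~\ref{fig:hat-no-reflection}) shows that the only two Kitegrid lines avoiding a given Hat's interior are obstructed by a forced neighbouring kite. Glide reflections are excluded because their square is a nontrivial translation. Order-$6$ rotations are excluded by a vertex-angle count: the center cannot be interior to a tile, hence is a vertex where six angles of $\pi/3$ would meet, contradicting that the Hat has only angles $\pi/2$ and $2\pi/3$. Finally, realizability of the $\langle R_3\rangle$ case is not part of this proposition; it is established separately (Proposition~\ref{p:rothat}) via the $HTPF$ substitution from a fylfot seed, not by pointing at Figure~\ref{fig:discrete_hat_tiling}.
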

\begin{proof}
  From Item~\ref{i:kitehat} of Theorem~\ref{thm:cite-hat}, any Hat tiling yields the Kitegrid when decomposing the tiles in kites, so that its symmetries must be included in the symmetry group $\Gamma$ of the Kitegrid.

  Let us analyse the subgroups of $\Gamma$ which could be candidates as stabilizers of Hat tilings:
  \begin{itemize}
  \item From Item~\ref{i:aperhat} of Theorem~\ref{thm:cite-hat},
  no Hat tiling admits a translation as a symmetry.
  \item no reflection is possible: indeed the axis of a reflection symmetry of a Hat tiling cannot intersect the interior of any tile.
    As any Hat tiling induces a Kitegrid when decomposing the tiles, such an axis would be a line in the Kitegrid.
    There are only two such lines on the boundary of a Hat (which do not intersect its interior), both of which are obstructed by a neighbouring tile, see Figure~\ref{fig:hat-no-reflection}.
    Hence a Hat tiling does not admit reflection symmetries.
    \begin{figure}[htp]
      \center
      \includegraphics[width=0.4\textwidth]{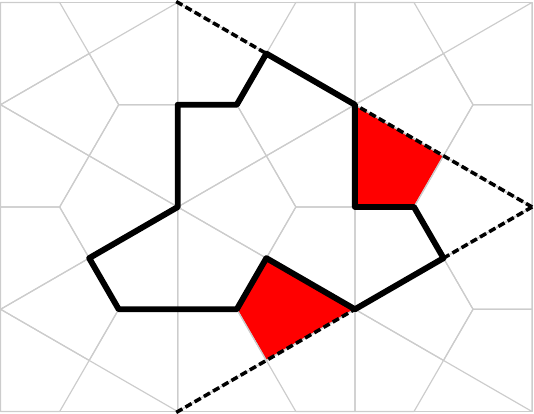}
      \caption{A Hat tile and its two non-intersecting adjacent KiteGrid axes (dashed), in red the obstructing kites: any neighbour Hat tile containing a red kite would intersect the corresponding axis.}
      \label{fig:hat-no-reflection}
    \end{figure}
  \item no nontrivial sliding reflection is possible: otherwise iterating it twice would give a translation symmetry.
  \item rotations of order 6 are not possible either: indeed the Hat tile does not have a symmetry of order 6, so a center of symmetry would have to be a vertex; this it is impossible because the Hat has only angles of $\frac{\pi}{2}$ and $\frac{2\pi}{3}$.
  \end{itemize}
  This leaves only the rotations of order $3$ as possible symmetries for Hat tilings.
  Every rotation of order $3$ in $\Gamma$ is conjugated to $R_3$ by some translation of $\Gamma$.
\end{proof}

This result gives an upper bound on the stabilizers and symmetry groups of Hat tilings.

We now prove that this bound is tight, that is: there exists a Hat tiling $\tiling = (\text{Hat}, C)$ such that $\sym(\tiling) = \Stab(C) = \{\id,R_3,R_3^2\}$.

As a tool to generate tilings, an auxiliary tileset of 4 polysemikites, called $HTPF$, was introduced in \cite{hat}, see Figure~\ref{fig:htpf-clusters} in dotted lines, and a substitution on this tileset, see Figure~\ref{fig:htpf-subst}.
Note that the $H$, $T$, $P$ and $F$ tiles have an arrow as label, which is an important part of the tiles; in particular adjacent tiles cannot have arrows facing each other.

We call \dfn{substitution} a function that maps each tile to a patch of tiles, see Figure~\ref{fig:htpf-subst}.
Note that in Figure~\ref{fig:htpf-subst}, only the tiles in full lines and with labels are in the image, and the shapes in dotted lines are called \dfn{forced-neighbour tiles}.
The substitution is then extended to patches of tiles by applying it separately to each tile and gluing together the obtained patches (on the forced-neighbour tiles in our case). For more about geometric and combinatorial substitutions see for example \cite{jolivet}.

A substitution $\substitution$ is called \dfn{well-defined} when for any single tile, the substitution can be iterated infinitely many times, and in doing so, the whole plane is asymptotically covered.
In this case, we call \dfn{admissible} any pattern that is a subpattern of $\sigma^k(t)$ for some integer $k$ and tile $t$.
An infinite tiling $\tiling$ is called \dfn{valid} for the substitution if all of its patterns $\pattern \sqsubset \tiling $ are admissible.

We call \dfn{scaling} a substitution $\sigma$ for which there exists a scalar $\lambda > 1$ such that for any admissible pattern $\pattern$, we have $\lambda \support(\pattern) \subset \support(\sigma(\pattern))$, where $\support(\pattern)$ is the support of the pattern (the union of its tiles).

\begin{thm}[HTPF substitution \cite{hat}]~
  \begin{itemize}
  \item The $HTPF$ substitution from Figure~\ref{fig:htpf-subst} is well-defined and scaling;
  \item Any $HTPF$ tiling which is valid for the substitution can be decomposed into a Hat tiling (see Figure~\ref{fig:htpf-clusters}) with the same symmetries \cite[proof of Theorem~2.1]{hat}.
    \end{itemize}
  \label{thm:cite-hat-substitution}
\end{thm}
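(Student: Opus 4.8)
We follow \cite{hat}; the statement being quoted from there, we only indicate the structure of the argument. The two items are proved independently.

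\textbf{(1) Well-definedness and scaling.} The substitution $\substitution$ is given on the four metatiles $H,T,P,F$ by Figure~\ref{fig:htpf-subst}, each image being a patch of labelled tiles together with its forced-neighbour tiles. First I would check, by inspection of that single figure, that each $\substitution(t)$ together with its forced-neighbour tiles is a valid patch (no overlaps, no two arrows facing each other). Next I would carry out the crucial \emph{coherence} check: whenever two tiles $t,t'$ sit next to each other in an admissible pattern, $\substitution(t)$ and $\substitution(t')$ glue together along the corresponding forced-neighbour tiles into a valid patch. Since $\substitution$ is designed so that every adjacency occurring in an admissible pattern already occurs inside $\substitution^{k}(t)$ for a small fixed $k$ and some tile $t$, this reduces to a finite verification on the supertiles $\substitution(t)$ and $\substitution^{2}(t)$, $t\in\{H,T,P,F\}$. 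For scaling, it suffices to exhibit one $\lambda>1$ (the inflation factor, read off Figure~\ref{fig:htpf-subst}) with $\lambda\,\support(t)\subset\support(\substitution(t))$ for each of the four metatiles, where $\support(\substitution(t))$ denotes the union of the labelled tiles only; this extends to an arbitrary admissible pattern $\pattern$ tile by tile, since $\lambda\,\support(\pattern)=\bigcup_{t\in\pattern}\lambda\,\support(t)\subset\bigcup_{t\in\pattern}\support(\substitution(t))\subset\support(\substitution(\pattern))$. Finally, to see that iterating $\substitution$ on a single tile covers the plane, I would locate a metatile $t_0$ and an integer $k\ge1$ such that a translate of $t_0$ lies, position-coherently, in the interior of $\substitution^{k}(t_0)$; then $\substitution^{k}(t_0)\subset\substitution^{2k}(t_0)\subset\cdots$ is a nested increasing chain of valid patches whose supports, by scaling, eventually contain any ball, so their union is $\R^2$.

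\textbf{(2) Decomposition into Hats.} Figure~\ref{fig:htpf-clusters} exhibits, for each metatile $t$, a finite patch $D(t)$ of Hats with $\support(D(t))=\support(t)$; the assignment $D$ commutes with isometries, $D(h\cdot t)=h\cdot D(t)$. I would then check compatibility with adjacencies: for every pair of metatiles $t,t'$ adjacent in an admissible pattern, $D(t)$ and $D(t')$ glue into a valid Hat patch — overlap-freeness is automatic from the support equality, and validity of the resulting configuration is once more a finite check over the adjacencies occurring inside the supertiles. Hence, for any HTPF tiling $\tiling$ valid for $\substitution$, the family $D(\tiling)\defeq\bigcup_{t\in\tiling}D(t)$ is a Hat tiling of $\R^2$. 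Equivariance of $D$ immediately gives $\sym(\tiling)\subseteq\sym(D(\tiling))$. The reverse inclusion requires \emph{recognizability}: the Hat tiling $D(\tiling)$ determines $\tiling$, i.e. $D$ is injective on valid HTPF tilings (equivalently, the Hats of $D(\tiling)$ group into $H,T,P,F$ clusters in exactly one way) — this is the combinatorial content of \cite[Lemma A.6 and proof of Theorem~2.1]{hat}, compatible with Item~\ref{i:kitehat} of Theorem~\ref{thm:cite-hat}. Granting it, $h\cdot D(\tiling)=D(\tiling)$ yields $D(h\cdot\tiling)=D(\tiling)$, hence $h\cdot\tiling=\tiling$; so $\sym(D(\tiling))\subseteq\sym(\tiling)$ and the two groups agree.

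\textbf{Main obstacle.} The routine parts (validity of each $\substitution(t)$, the value of $\lambda$) are immediate from the figures. The bulk of the work is the two finite-but-sizeable case analyses — coherence of $\substitution$ over all admissible adjacencies and compatibility of $D$ over all admissible adjacencies — together with the recognizability statement underpinning the nontrivial inclusion $\sym(D(\tiling))\subseteq\sym(\tiling)$; identifying a position-coherent fixed metatile $t_0$ for the covering step also takes some care. All of this is carried out in \cite{hat}.
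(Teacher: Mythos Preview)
The paper does not give its own proof of this theorem: it is stated purely as a quotation from \cite{hat} (note the label \texttt{thm:cite-hat-substitution} and the explicit pointers to \cite[proof of Theorem~2.1]{hat} inside the statement), and the surrounding text only comments on why this particular substitution was chosen over the $H_7/H_8$ one. There is therefore nothing in the paper to compare your attempt against.

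That said, your sketch is a sensible reconstruction of how the argument in \cite{hat} is organised. One small remark on item~(2): for the inclusion $\sym(D(\tiling))\subseteq\sym(\tiling)$ you invoke injectivity of $D$ on valid $HTPF$ tilings, which is the right idea, but you should also note (as you implicitly use) that $h\cdot\tiling$ is again a \emph{valid} $HTPF$ tiling whenever $\tiling$ is and $h$ is an isometry; otherwise injectivity of $D$ on valid tilings would not apply. This is immediate since validity is an isometry-invariant notion, but it is worth saying.
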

The original proof that the Hat can tile the whole plane is based on the $HTPF$ substitution.
Note that in \cite{hat}, a second substitution (with clusters $H_7$ and $H_8$) is proposed, but it is less suited to our needs, as it is not scaling.

\begin{figure}[htp]
  \center
  \includegraphics[width=\textwidth]{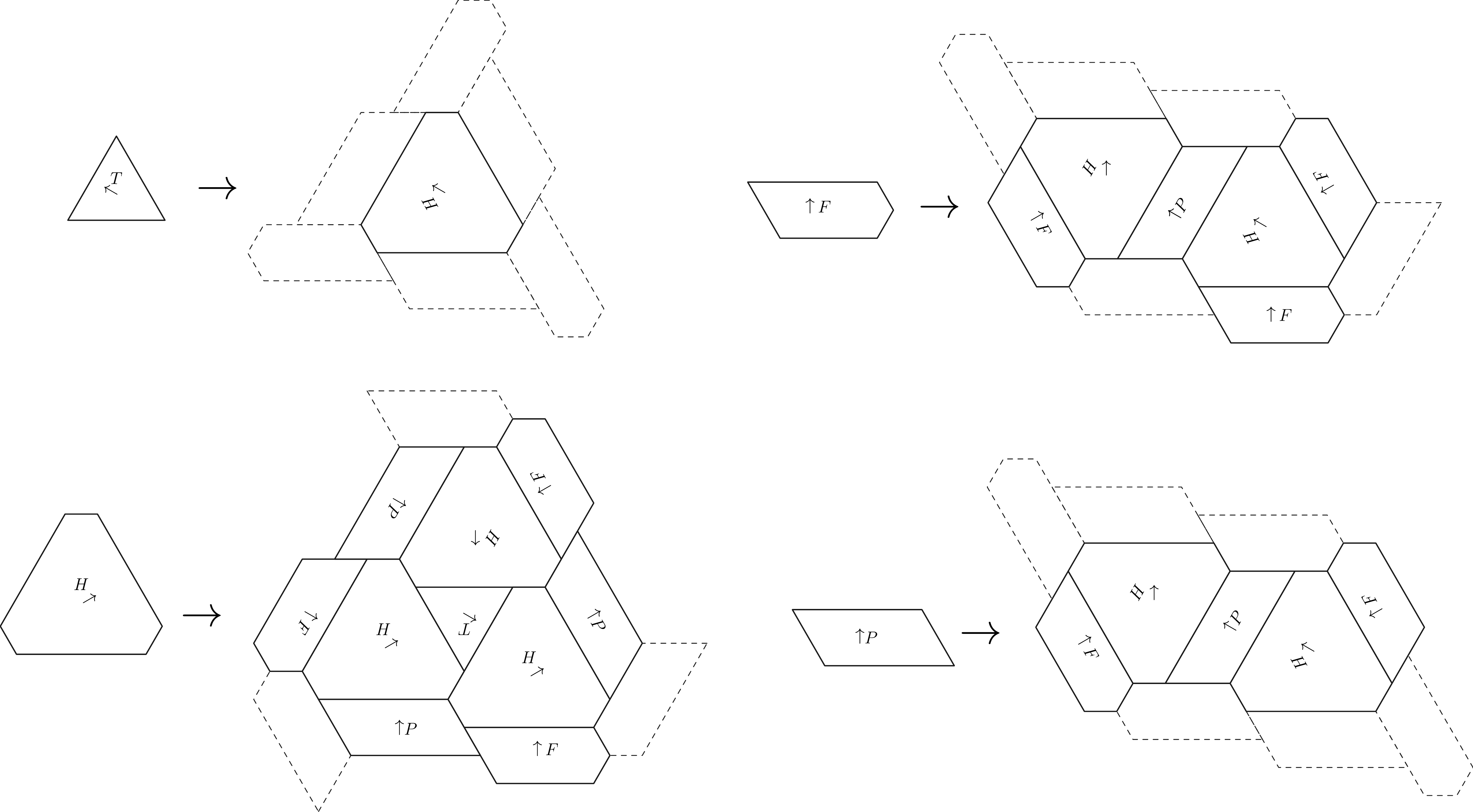}
  \caption{The $HTPF$ substitution. In the image of the tiles, shapes drawn in dotted lines are forced-neighbour tiles.}
  \label{fig:htpf-subst}
\end{figure}

\begin{figure}[htp]
  \center
  \includegraphics[width=.6\textwidth]{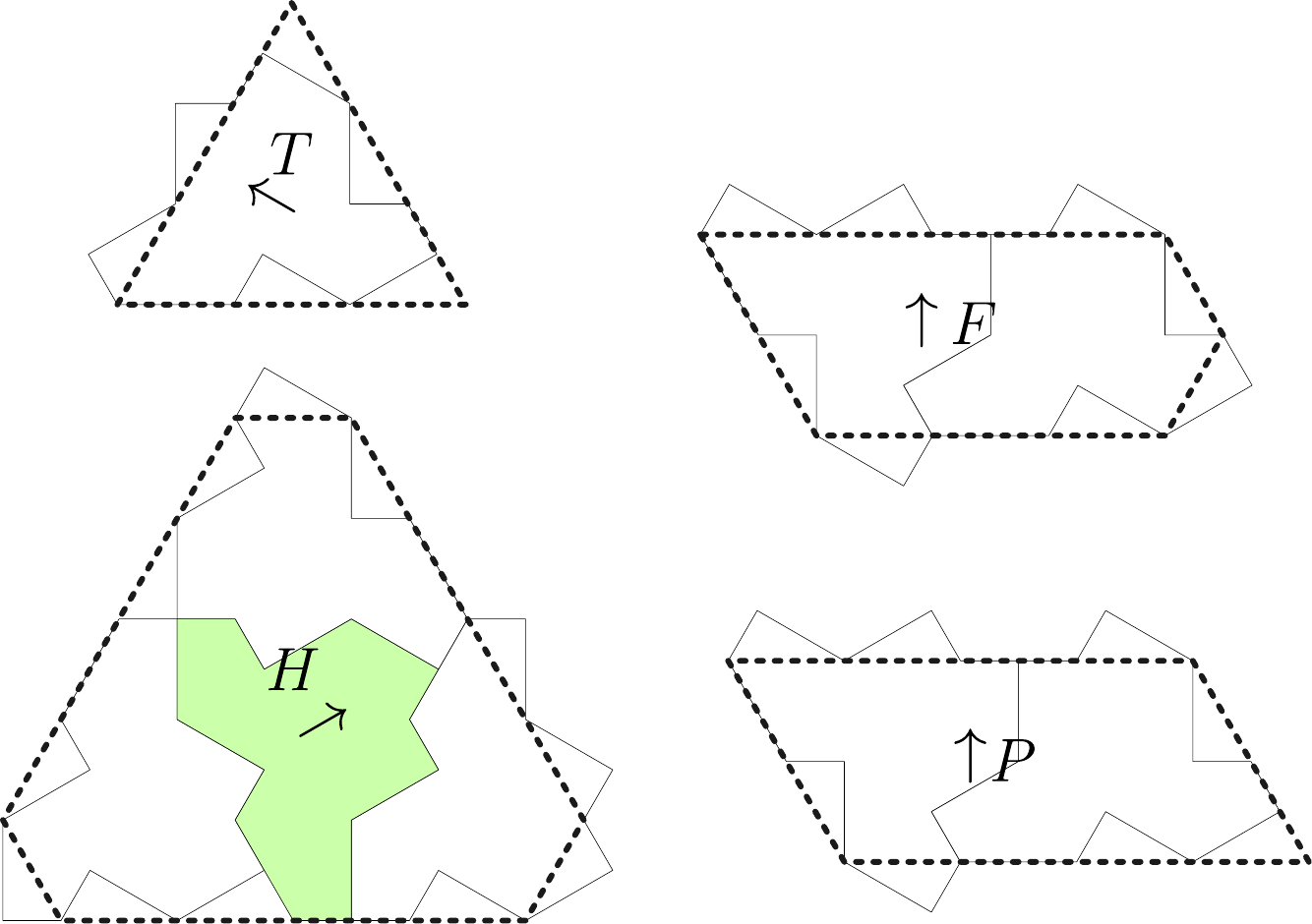}
  \caption{$HTPF$ tiles as clusters of Hats, note that these clusters can also be defined as polysemikites.}
  \label{fig:htpf-clusters}
\end{figure}

\begin{prop}\label{p:rothat}
  There exists a Hat tiling $\tiling$ with cotiler $C$ whose stabilizer is exactly $\{\id,R_3,R_3^2\}$, that is: $\Stab(C)=\sym(\tiling)=\{\id, R_3, R_3^2\}$.
\end{prop}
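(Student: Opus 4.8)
The plan is to build, by iterating the $HTPF$ substitution on a well-chosen symmetric seed, an $HTPF$ tiling invariant under a rotation of order $3$, to carry that symmetry over to a Hat tiling via Theorem~\ref{thm:cite-hat-substitution}, to bound the symmetry group from above by Proposition~\ref{p:aperhat}, and finally to translate so that the rotation centre becomes the head of the Kite. Observe first that, since $\sigma$ acts on a patch by substituting each tile separately and regluing, it is equivariant under the isometries preserving the $HTPF$ tileset; in particular $\sigma(\rho\cdot Q)=\rho\cdot\sigma(Q)$ for every rotation $\rho$ and admissible patch $Q$, so $\rho$-invariance of a patch is preserved by all iterates of $\sigma$.

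The heart of the argument is to produce a \emph{symmetric legal seed}: a threefold centre $v$ of the Semikitegrid together with a finite admissible $HTPF$ pattern $P$ with $v$ in the interior of $\support(P)$, with $\rho_v\cdot P=P$ (where $\rho_v$ is the rotation of angle $\tfrac{2\pi}{3}$ about $v$), and with $P\sqsubseteq\sigma^{k}(P)$ in the original position for some $k\ge1$. One obtains such a $P$ by the standard legal-seed mechanism: start from a high-level supertile $\sigma^{m}(t)$, which covers an arbitrarily large region because $\sigma$ is well-defined; locate a point $v$ around which its $HTPF$ pattern is threefold symmetric; let $P$ be the union of the tiles incident to $v$ and its two $\rho_v$-rotates; and enlarge $m$ and $k$ until $P$ reoccurs centrally inside $\sigma^{k}(P)$. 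Given $P$, iterating $\sigma^{k}$ gives an increasing chain $P\sqsubseteq\sigma^{k}(P)\sqsubseteq\sigma^{2k}(P)\sqsubseteq\cdots$ of $\rho_v$-invariant admissible patterns, and $\tiling\defeq\bigcup_{n\ge0}\sigma^{nk}(P)$ is an $HTPF$ tiling: it covers $\R^2$ because $\support(\sigma^{nk}(P))\supseteq\lambda^{nk}\support(P)$ with $\lambda>1$ and $v$ in the interior of $\support(P)$; it is valid because any finite pattern of $\tiling$ lies in some $\sigma^{nk}(P)\sqsubseteq\sigma^{j}(t)$; and $\rho_v\in\sym(\tiling)$.

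By Theorem~\ref{thm:cite-hat-substitution}, $\tiling$ decomposes into a Hat tiling $\tiling'$ with $\sym(\tiling')=\sym(\tiling)\ni\rho_v$; fix a cotiler $C'$ for $\tiling'$, so that $\Stab(C')=\sym(\tiling')$ because the Hat has no nontrivial symmetry. By Proposition~\ref{p:aperhat}, $\sym(\tiling')$ is $\{\id\}$ or a conjugate of $\{\id,R_3,R_3^{2}\}$; it is not trivial, as it contains the order-$3$ element $\rho_v$, so $\sym(\tiling')=\{\id,\rho_v,\rho_v^{2}\}=t\{\id,R_3,R_3^{2}\}t^{-1}$ for some translation $t\in\Gamma$. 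Then the translated Hat tiling $\tiling''\defeq t^{-1}\cdot\tiling'$, with the corresponding cotiler $C$, satisfies $\Stab(C)=\sym(\tiling'')=t^{-1}\sym(\tiling')t=\{\id,R_3,R_3^{2}\}$, which is the claim. The main obstacle is the second step: exhibiting an $R_3$-symmetric admissible configuration that reproduces itself under the $HTPF$ substitution requires inspecting Figure~\ref{fig:htpf-subst} (or importing such a configuration from \cite{hat}); everything else --- the nesting, the covering of the plane, validity, the transfer to Hats, and the upper bound --- is routine given the quoted results.
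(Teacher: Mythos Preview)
Your approach is essentially the paper's: build an $R_3$-invariant $HTPF$ limit tiling from a symmetric self-reproducing seed, decompose into Hats via Theorem~\ref{thm:cite-hat-substitution}, and invoke Proposition~\ref{p:aperhat} for the upper bound. The paper simply makes your abstract ``symmetric legal seed'' concrete: it takes $P$ to be the \emph{fylfot} $F_3$ (three $F$-tiles meeting at the origin), checks directly that $F_3$ is admissible (it sits in $\sigma^2(H)$) and that $F_3$ reappears at the centre of $\sigma^4(F_3)$, so $k=4$; you correctly flag this verification as the one non-routine step.
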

\begin{proof}
  We use the $HTPF$ substitution to construct a fixpoint (limit) tiling with symmetries $\{\id, R_3, R_3^2\}$.
  We denote by $\substitution$ the $HTPF$ substitution.

  We use the \dfn{fylfot} or $F_3$  pattern consisting of 3 $F$ tiles (see Figure~\ref{fig:fylfot} left) with the central vertex on the origin $0\in \R^2$.
  First note that this pattern is admissible for the substitution as it appears in $\sigma^2(H)$.
  When iterating $\substitution$ on the fylfot, we see that $F_3$ is at the center of $\substitution^2(F_3)$ up to a rotation of angle $\pi$, so that $F_3$ appears exactly at the center of $\substitution^4(F_3)$.
  From this, we deduce that the sequence $(\substitution^{4n}(F_3))_{n\in\mathbb{N}}$ is a increasing (for inclusion) sequence of patches.
  This, together with the fact that the substitution is well-defined and scaling, gives us the existence of a limit tiling $\tiling_{R_3}^{HTPF}$ to this sequence.
  $\tiling_{R_3}^{HTPF}$ is a fixpoint of $\substitution^4$ and is invariant under rotations $R_3$ and $R_3^2$.
  Decomposing the $HTPF$ clusters into Hat tiles in $\tiling_{R_3}^{HTPF}$, we obtain a Hat tiling $\tiling_{R_3}^{Hat}$ which also has symmetries $\{\id, R_3, R_3^2\}$.
  \begin{figure}[htpb]
    \center
    \includegraphics[width=\textwidth]{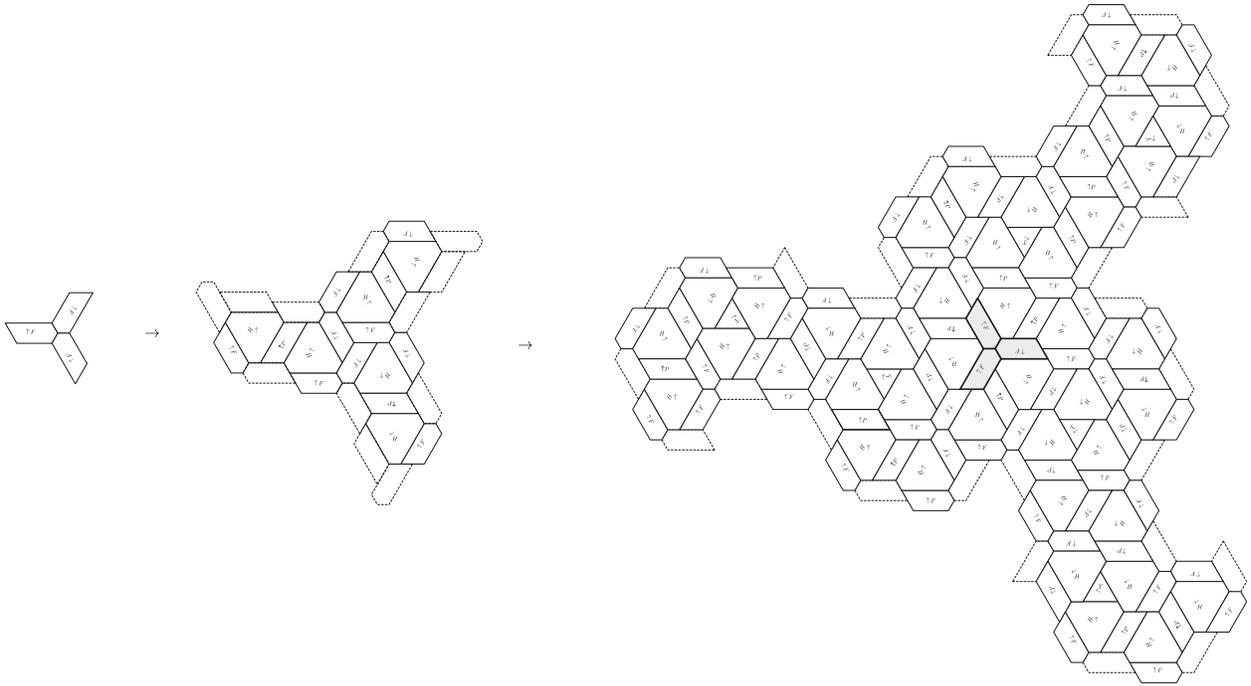}
    \caption{Iterating the $HTPF$ substitution on the fylfot (left) pattern.
      We observe the fylfot at the center of the second iteration.}
    \label{fig:fylfot}
  \end{figure}
\end{proof}

\begin{rem}\label{r:idhat}
  As the $HTPF$ substitution $\substitution$ is sufficiently well-behaved, we can also construct a limit tiling without any non-trivial symmetry.
  
  Indeed, a variation around classical recognizability arguments~\cite{solomyak1998} gives that every valid $HTPF$ tiling has a unique preimage by the substitution.
  Also, no Hat decomposition of $HTPF$ tile $t$ has a $3$-fold symmetry, and this is inherited to supertiles $\sigma^k(t)$, where $k\in\N$.
  The latter property forces any center of $3$-fold symmetry in a tiling to be at the boundary of an infinite-level supertile, and the former states that the boundary of infinite-level supertiles is uniquely defined.

  As the triangle $T$ is at the center of $\sigma^2(T)$ up to a rotation of angle $\tfrac{\pi}{3}$, we construct the fixpoint (limit) tiling $\tiling_{\id}^{HTPF}$ by iterating $\sigma^{12}$ from $T$.
  This tiling has no non-trivial symmetry because no point is at the boundary of an infinite-level supertile.
  We can then decompose $\tiling_{\id}^{HTPF}$ in $\tiling_{\id}^{Hat}$ without any non-trivial symmetry.
\end{rem}

\subsection{The \gat}
\label{subsec:discretehat}
  
  Recall that $\Gamma^{+}\actson\W=\R^2$ and $\Gamma\actson\W$ properly
  discontinuously, and their fundamental domains (up to
  $\lambda$-negligible sets) are the Kite $K$ and the Semikite $K_+$. The
  Hat is a polykite and a polysemikite, which allows to apply the work of
  Section~\ref{s:discr}.

  We get \[\grp[\Gamma^{+},K](\text{Hat})=\{1,R_6,R_6^{-1},R_6^{-2},R_3,R_3^2,R_3R_6^{-1},R_3^2R_6^{-1}\}\subset\Gamma^{+}\] and
  \[\grp[\Gamma,K_+](\text{Hat})=\{1,\alpha,\beta, \gamma,  \alpha\beta,  \beta\alpha,  \beta\gamma,  \gamma\beta, 
    \alpha\beta\alpha, \beta\alpha\beta, \beta\alpha\gamma,
    \beta\gamma\beta, \gamma\beta\alpha, \alpha\beta\alpha\beta,
    \beta\alpha\gamma\beta, \gamma\beta\alpha\beta\}\subset\Gamma;\]
the latter is called \dfn{\gat}.
  \begin{figure}[htp]
    \center
    \includegraphics[width=0.3\textwidth]{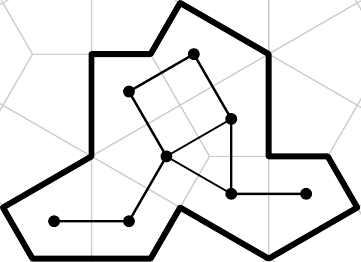} \hspace*{1cm}
    \includegraphics[width=0.3\textwidth]{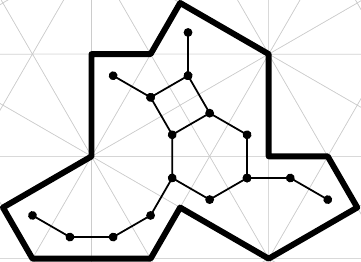}
    \caption{Group tiles induced in $\Gamma^{+}$ (left) and $\Gamma$ (right) by the Hat.}
    \label{fig:hat_dual}
  \end{figure}

Figure~\ref{fig:hat_dual} shows the two sets $\grp[\Gamma^{+},K](\text{Hat})$ and $\grp[\Gamma,K](\text{Hat})$ as subsets of the Cayley graphs of the groups $\Gamma^{+}$ and $\Gamma$.

  \begin{thm}[Aperiodic monotile in $\Gamma$]
    The \gat\ 
    is a mildly aperiodic monotile of the group $\Gamma$.
    More precisely, the possible stabilizers of cotilers of the \gat\ are $\{\id\}$ or conjugated to $\{\id,R_3,R_3^2\}$, both cases being reached.
  \end{thm}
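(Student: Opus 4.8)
The plan is to deduce this theorem directly from the poly-$K$ dictionary of Section~\ref{s:discr} together with the geometric facts about Hat tilings established in Propositions~\ref{p:aperhat} and~\ref{p:rothat}. First I would invoke Corollary~\ref{c:main-one-k} with $G=\Gamma$, $\W=\R^2$, and $K=K_+$ the Semikite: since $\Gamma$ acts properly discontinuously on $\R^2$ with the Semikite as a fundamental domain (so $K_+$ yields a grid for $\Gamma\actson\R^2$), and since the Hat is a polysemikite, the map $\grp[\Gamma,K_+]$ is a bijection from $\equiv_\lambda$-classes of poly-$K_+$ tiles onto finite tilesets of $\Gamma$, with $\ct(\grp[\Gamma,K_+](\text{Hat}))=\ct(\text{Hat})$ up to the syntactic renaming of the tile. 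In particular cotilers of the \gat\ in $\Gamma$ correspond bijectively and $\Gamma$-equivariantly to cotilers $C\subset\Gamma$ of the Hat, and this correspondence commutes with the $\Gamma$-action on cotiler sets, hence preserves stabilizers: $\Stab(\text{the }\gat\text{ cotiler})=\Stab(C)$ as subgroups of $\Gamma$.

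The key subtlety I would flag is that Section~\ref{subsec:hat_symmetries} works with cotilers $C\subset\Isom(\R^2)$ for the Hat, while here we restrict to cotilers $C\subset\Gamma$; one must check these are the same thing. By Item~\ref{i:kitehat} of Theorem~\ref{thm:cite-hat}, the cotiler group of any Hat tiling is contained in $\Gamma$, so by Remark~\ref{r:ct}\eqref{i:ct} every Hat tiling can be normalized so that its cotiler lies inside $\Gamma$; and since the Hat has no nontrivial symmetry, $\Stab(C)=\sym(\tiling)$ for such $C$, so nothing is lost. Thus the set of stabilizers of cotilers of the Hat (as a subset of $\Gamma$, up to $\Gamma$-conjugacy) coincides with the set of symmetry groups $\sym(\tiling)$ of Hat tilings $\tiling$ whose cotiler lies in $\Gamma$. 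By Proposition~\ref{p:aperhat} each such group is $\{\id\}$ or $\Gamma$-conjugate to $\{\id,R_3,R_3^2\}$, and by Proposition~\ref{p:rothat} (together with the accompanying remark producing $\tiling_{\id}^{Hat}$) both cases actually occur, realized by cotilers inside $\Gamma$. Transporting through the bijection $\grp[\Gamma,K_+]$, the same holds for the \gat.

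Finally I would assemble the aperiodicity conclusion: every stabilizer of a \gat\ cotiler is finite (of order $1$ or $3$), hence no \gat\ cotiler is mildly periodic, so the \gat\ is mildly aperiodic; and since the case $\{\id,R_3,R_3^2\}$ is attained, some cotiler is weakly periodic, so it is not strongly aperiodic in the weak sense, matching the precise statement. I expect the only real friction point to be bookkeeping around the two layers of identification — first that ``cotiler in $\Isom(\R^2)$'' may be replaced by ``cotiler in $\Gamma$'' without changing the set of achievable stabilizers, and second that the syntactic-renaming bijection of Corollary~\ref{c:main-one-k} is genuinely $\Gamma$-equivariant on cotiler sets so that it carries stabilizers to stabilizers; both are essentially formal, but need to be spelled out to avoid conflating $\sym(\tiling)$ with $\Stab(C)$ or the geometric Hat with the group-theoretic \gat. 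No genuinely new geometric or combinatorial input is required beyond Propositions~\ref{p:aperhat} and~\ref{p:rothat} and Theorem~\ref{t:maink}.
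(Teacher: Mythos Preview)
Your proposal is correct and follows essentially the same route as the paper: reduce from $\Isom(\R^2)$-cotilers to $\Gamma$-cotilers via Remark~\ref{r:ct}\eqref{i:ct} and Theorem~\ref{thm:cite-hat}\eqref{i:kitehat}, transport through Corollary~\ref{c:main-one-k}, and read off the stabilizers from Propositions~\ref{p:aperhat} and~\ref{p:rothat}. Your write-up is in fact slightly more explicit than the paper's on the $\Gamma$-equivariance of the cotiler bijection and on why the trivial-stabilizer case is also attained.
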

\begin{proof}
  First, we apply Item~\ref{i:ct} of Remark~\ref{r:ct}: as any Hat tiling sits on a Kitegrid, any cotiler of the Hat with respect to $\Isom(\R^2)$ is also (up to a global isometry $g\in \Isom(\R^2)$) a cotiler of the Hat with respect to $\Gamma$.
  
  We then apply Corollary~\ref{c:main-one-k} to obtain that the set of cotilers for the Hat and \gat\ with respect to $\Gamma$ are identical.

  Now, Proposition~\ref{p:aperhat} gives that any possible cotiler stabilizer for \gat\ is a subgroup of $\{\id,R_3,R_3^2\}$.
  By Proposition~\ref{p:rothat} and Remark~\ref{r:idhat}, we obtain that this bound is tight, that is: there exists a cotiler $C$ for the \gat\ with $\Stab(C)=\{\id,R_3,R_3^2\}$, and a cotiler $C'$ with $\Stab(C')=\{\id\}$.
  \begin{figure}[htp]
    \center
    \includegraphics[width=0.6\textwidth]{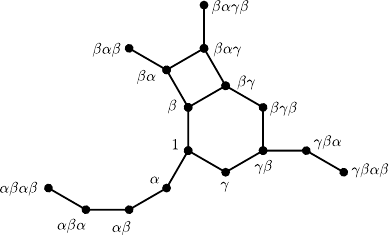}
    \caption{The \gat\ tile: the $\Gamma$-tile induced by the Hat.}
    \label{fig:hat_dual_labels}
  \end{figure}
  \end{proof}

  \begin{rem}
    The set $\grp[\Gamma^{+},K](\text{Hat})$ does not tile the group $\Gamma^{+}$.
    
 Indeed, using the same reasoning, we obtain that the set of cotilers of $\grp[\Gamma^{+},K](\text{Hat})$ is equal to the set of cotilers of the Hat with respect to $\Gamma^{+}$.
 As $\Gamma^{+}$ is a subgroup of the group of direct isometries of $\R^2$ (it contains no reflection), and the Hat does not tile $\R^2$ without reflections \cite{hat}, we obtain that the set of cotilers of $\grp[\Gamma^{+},K](\text{Hat})$ is empty.

 However, using the reflection $\alpha$ around a Kitegrid axis, the set $\{\grp[\Gamma^{+},K](\text{Hat}), \grp[\Gamma^{+},K](\alpha(\text{Hat}))\}$ is a $2$-tile mildly aperiodic tileset in $\Gamma^{+}$, see Figure~\ref{fig:hat_and_reflection}.

 \begin{figure}[htp]
   \center
   \includegraphics[width=0.5\textwidth]{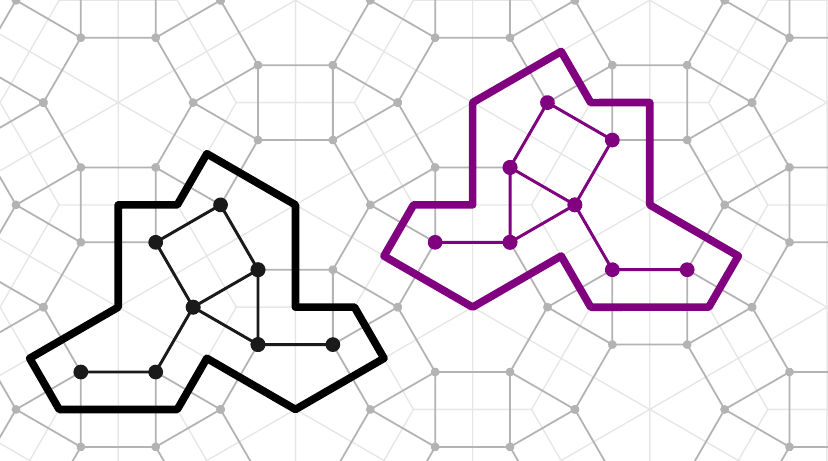}

   \caption{The 2-tile mildly aperiodic tileset induced by the Hat and its symmetric in $\Gamma^{+}$ (up to translation); the two tiles are not $\Gamma^+$-translations of each other.}
\label{fig:hat_and_reflection}
 \end{figure}
  \end{rem}

  The Turtle tile \cite{hat} (Figure~\ref{fig:turtle}) or Hat-Turtle systems \cite{spectre} (Figure~\ref{fig:hat-turtle}) can also be turned group-theoretic. 

  \begin{figure}[htp]
    \center \includegraphics[width=0.3\textwidth]{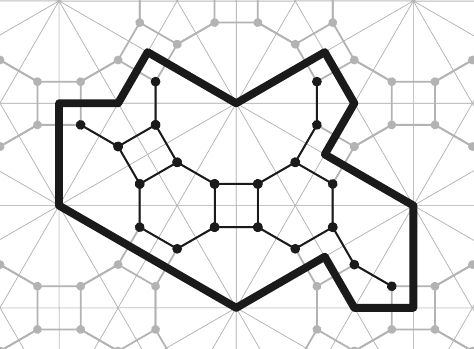}
    \caption{The Turtle tile \cite{hat} induces a second mildly aperiodic monotile in $\Gamma$.}
    \label{fig:turtle}
  \end{figure}

  \begin{figure}[htp]
    \center \includegraphics[width=0.5\textwidth]{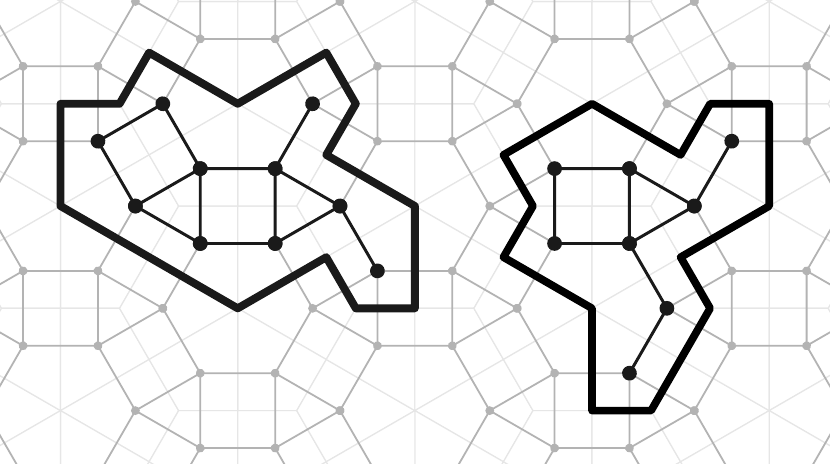}
    \caption{The Hat-Turtle system \cite{spectre} induces a 2-tile mildly aperiodic tileset in $\Gamma^{+}$ (without the reflections).}
    \label{fig:hat-turtle}
  \end{figure}

  Note that the group $\Gamma$ is obtained by adding a reflection to
  the group $\Gamma^{+}$: $\Gamma=\Gamma^{+}\rtimes\langle\alpha\rangle$.
  Thus the Cayley graph of $\Gamma^{+}$ can be recovered from two
  interlaced copies inside the Cayley graph of $\Gamma$.
  Restricting to the \gat: Figure~\ref{fig:discrete_hat_semidirect}
  shows two interlaced copies of $\grp[\Gamma^{+},K](\text{Hat})$ inside it.

  \begin{figure}[htp]
    \center
    \includegraphics[width=0.3\textwidth]{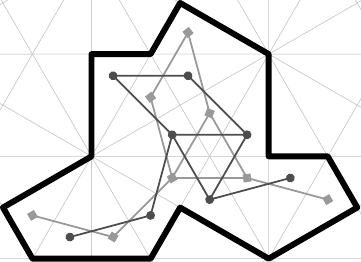}
    \caption{We can partition the vertices of the \gat\ into two copies of the $\Gamma^{+}$-discrete Hat (from Figure~\ref{fig:hat_dual}, left), each of which corresponds to one orientation (even / odd signature of the transformation).}
    \label{fig:discrete_hat_semidirect}
  \end{figure}

  \FloatBarrier
  \section{General version}\label{sec:general}
  In this section, we consider a topological space $\W$ endowed with a nontrivial Borel measure $\lambda$, a group $G$ acting on it by negligibility-preserving self-homeomorphisms, and a finite set $\T$ of positive-measure tiles $T\subset\W$.

  When is the discretization as performed in Section~\ref{s:discr} possible?
  We need a transformation map turning geometric tiles into group tiles.
  The first candidate is the following.
  Given a base point $0\in\W$, one can define its \dfn{orbit map} from $G$ to $\W$: $\orb_0:g\mapsto g\cdot 0$.
  One can build, from any set $T\subset\W$, the set $\grp(T)\defeq\orb_0^{-1}(T)=\sett[G]{g}{g\cdot 0\in T}\subset G$.

  In this section, we give two general statements: one for transitive actions (Subsection~\ref{ss:trans}), and one for discrete actions (Subsection~\ref{ss:discr}, somewhat generalizing Section~\ref{s:discr}).
  
  \subsection{Building tiles in continuous groups}\label{ss:trans}
   Here is a continuous variation on our theorem.
   We include it for completeness and the parallelism with the discrete version.

   We assume in this subsection that $G$ is a topological group endowed with some nontrivial Borel measure.
   A map between measure spaces is \dfn{nonsingular} if the preimage of any negligible subset is negligible.
   It is \dfn{infinite-measure-preserving} if the image of any infinite-measure subset has infinite measure.
   An action of a topological group $G$ on $\W$ is \dfn{proper} if for every compact $K,K'\subset\W$, $\sett[G]g{g\cdot K\cap K'}$ is compact.

   The tile $\grp(T)$ obtained in the group enjoys nice properties.
   \begin{rem}\label{r:nicetiles}For every $0\in\W$,
     \begin{enumerate}
  \item If $\orb_0$ is negligibility-preserving and $T$ has positive measure, then $\grp(T)$ has positive measure;
  \item If $\orb_0$ is {in}finite-measure-preserving and $T$ has finite measure, then $\grp(T)$ has finite measure;
  \item If $G\actson\W$ is proper and $T$ is {relatively compact}, then $\grp(T)$ is compact.
\end{enumerate}\end{rem}

\begin{prop}\label{p:trans}
  Endow $G$ with a measure and chose $0$ such that $\orb_0$ is nonsingular.
  
  If $C$ is a cotiler of $\T$ in $\W$ with respect to $G$, then $C$ 
  is a cotiler of $\grp(\T)$ in $G$, up to syntactically renaming each $T$ into $\grp(T)$.
\end{prop}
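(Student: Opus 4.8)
The plan is to push the cotiler condition forward through the orbit map $\orb_0\colon G\to\W$, using that $\grp=\orb_0^{-1}$ intertwines the two $G$-actions and that taking preimages is a Boolean homomorphism which, being nonsingular, preserves negligibility.

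First I would record the equivariance of $\grp$: for $g\in G$ and $T\subset\W$, substituting $h'=gh$ in the definition of $\grp$ gives
\[
  g\cdot\grp(T)=\sett[G]{gh}{h\cdot 0\in T}=\sett[G]{h'}{h'\cdot 0\in g\cdot T}=\grp(g\cdot T),
\]
where the leftmost $g\cdot$ denotes left multiplication in $G$. Hence, after syntactically renaming each $T$ into $\grp(T)$, the candidate cotiler of $\grp(\T)$ in $G$ has pieces exactly $\bigl(g\cdot\grp(T)\bigr)_{(g,T)\in C}=\bigl(\orb_0^{-1}(g\cdot T)\bigr)_{(g,T)\in C}$.

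Next I would transfer the two defining clauses of a cotiler. Since $\orb_0$ is everywhere defined, $\orb_0^{-1}$ commutes with unions, intersections and complements and sends $\W$ to $G$; so from the fact that $\W\setminus\bigcup_{(g,T)\in C}g\cdot T$ is $\lambda$-negligible, I obtain that
\[
  G\setminus\bigcup_{(g,T)\in C}g\cdot\grp(T)=\orb_0^{-1}\Bigl(\W\setminus\bigcup_{(g,T)\in C}g\cdot T\Bigr)
\]
is negligible in $G$, because $\orb_0$ is nonsingular. Similarly, for distinct $(g,T),(g',T')\in C$ the set $(g\cdot T)\cap(g'\cdot T')$ is $\lambda$-negligible, hence so is its $\orb_0$-preimage $(g\cdot\grp(T))\cap(g'\cdot\grp(T'))$, again by nonsingularity. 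As $C$ is countable and $\grp(\T)$ is finite, the sets $g\cdot\grp(T)$, $(g,T)\in C$, therefore cover $G$ and are pairwise disjoint, both up to sets negligible for the measure on $G$; that is, $C$ is a cotiler of $\grp(\T)$ in $G$.

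There is no genuinely hard step here — it is a completeness statement — and the main (mild) obstacle is simply the negligibility bookkeeping: one must check that $\orb_0^{-1}$ genuinely respects the set operations and that nonsingularity can be invoked on each of the countably many negligible ``error'' sets, which is legitimate since a countable union of negligible sets is negligible. Two remarks are worth making: the orbit map is in general neither injective nor surjective, which is why only this implication holds and there is no converse (contrary to Theorem~\ref{t:maink}); and with only nonsingularity some $\grp(T)$ could be negligible in $G$, so if one wants every $\grp(T)$ to be a genuine positive-measure tile — i.e.\ $\grp(\T)$ to be a tileset in the strict sense — it suffices to assume in addition that $\orb_0$ is negligibility-preserving, as in Remark~\ref{r:nicetiles}.
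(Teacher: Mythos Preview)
Your proof is correct and follows essentially the same route as the paper: both pull the cotiler condition back through $\orb_0^{-1}$ and invoke nonsingularity to transport negligibility from $\W$ to $G$. The paper phrases this pointwise (for almost every $g\in G$, the point $g\cdot 0$ lies in exactly one piece $h\cdot T$, hence $g$ lies in exactly one $h\cdot\grp(T)$), while you phrase it via the Boolean-homomorphism property of preimages; these are two wordings of the same argument.
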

\begin{proof}
  By nonsingularity and the covering condition for tilings, for $\lambda$-almost all $h\in G$, there exists a unique $(g,T)\in C$ such that {$h\cdot 0 \in g\cdot T$, which is equivalent to $g^{-1}h\in\grp(T)$, and to $h\in g\grp(T)$}.
  This proves that
  \[
    G \equiv_\lambda\bigsqcup_{(g,T)\in C} g\grp(T).
\popQED  \]
\end{proof}

\begin{thm}[Main result; continuous setting]\label{t:maincont}
  If $G\actson\W$ is transitive and $T$ is any tile, then $\geo(\grp(T))=T$.
  Moreover, if $\T$ is a finite tileset, $\mu$ is a measure on $G$ which makes $\orb_0$ nonsingular and negligibility-preserving, then $\T$ and $\grp(\T)$ have the same cotilers (up to syntactically renaming each $T$ into $\grp(T)$).
\end{thm}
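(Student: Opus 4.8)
I would separate the geometric identity from the cotiler correspondence, and split the latter into two implications, one of which is a direct appeal to Proposition~\ref{p:trans}. For the identity $\geo(\grp(T))=T$: unwinding definitions, $\geo(\grp(T))$ is the set of points $g\cdot0$ with $g\cdot0\in T$, hence $\subseteq T$ with no hypothesis; the reverse inclusion is exactly transitivity, which says $\orb_0$ is onto, so every $w\in T$ is some $g\cdot0$ and lies in $\geo(\grp(T))$. I would then record two consequences for later use: $\grp$ is injective on tiles, so $T\mapsto\grp(T)$ is a bijection $\T\to\grp(\T)$; and, by Remark~\ref{r:nicetiles}, each $\grp(T)$ has positive $\mu$-measure (negligibility-preservation of $\orb_0$ plus $\lambda(T)>0$), so $\grp(\T)$ is genuinely a finite tileset of $G$ and ``renaming $T$ into $\grp(T)$'' is an honest bijection on pairs $(g,T)$.

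For the correspondence, one direction is immediate: if $C$ is a cotiler of $\T$ in $\W$, then its renaming is a cotiler of $\grp(\T)$ in $G$ --- this is exactly Proposition~\ref{p:trans}, using only nonsingularity of $\orb_0$. For the reverse direction I would start from a cotiler $C'$ of $\grp(\T)$ in $G$, let $C$ be its preimage under the renaming, and base everything on the pointwise equivalence that, for $h,g\in G$ and $w=g\cdot0$,
\[g\in h\cdot\grp(T)\iff h^{-1}g\cdot0\in T\iff w\in h\cdot T.\]
Writing $B\subseteq G$ for the $\mu$-negligible set of points where $\bigsqcup_{(h,\grp(T))\in C'}h\cdot\grp(T)$ fails to be a partition of $G$ (uncovered, or covered at least twice), and $B'\subseteq\W$ for the analogous bad set of $\bigsqcup_{(h,T)\in C}h\cdot T$, I would show $B'\subseteq\orb_0(B)$: if $w\notin\orb_0(B)$, every preimage of $w$ (at least one exists, by transitivity) avoids $B$, hence lies in exactly one piece of the $G$-partition, and the displayed equivalence together with bijectivity of the renaming forces $w$ to lie in exactly one piece of the $\W$-partition, i.e. $w\notin B'$. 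Negligibility-preservation of $\orb_0$ then gives $\lambda(\orb_0(B))=0$; since $B'$ is measurable (cut out by the countably many conditions indexed by the countable $C'$), this yields $\lambda(B')=0$, so $C$ is a cotiler of $\T$ in $\W$. Combining the two implications with bijectivity of the renaming finishes the proof.

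The main obstacle will be this reverse implication, and inside it the step ``$B'\subseteq\orb_0(B)$, hence $\lambda(B')=0$'': one has to run Proposition~\ref{p:trans} backwards, using surjectivity of $\orb_0$ to lift a point of $\W$ to $G$ and then negligibility-preservation --- the ``image of a null set is null'' property, dual to the nonsingularity used in the easy direction --- to push the partition statement back down to $\W$. A minor technical caveat is that $\orb_0(B)$ need not be Borel, so the last inequality should be read via outer measure; this is harmless since $B'$ itself is measurable.
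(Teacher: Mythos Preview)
Your proof is correct and follows the same approach as the paper: the identity $\geo(\grp(T))=T$ via transitivity (the paper writes it as $G\cdot0\cap T=T$), one inclusion of cotilers by Proposition~\ref{p:trans}, and the converse by pushing the $G$-partition forward through $\orb_0$ and invoking negligibility-preservation. The paper's version of the converse is terser---it simply applies $\orb_0$ to both sides of $G\equiv_\mu\bigsqcup g\cdot\grp(T)$ and uses $\orb_0(g\cdot\grp(T))=g\cdot\geo(\grp(T))=g\cdot T$---whereas you spell out the bad-set inclusion $B'\subseteq\orb_0(B)$ and flag the Borel-measurability caveat, but the argument is the same.
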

\begin{proof}
  $\geo(\grp(T))={G\cdot0\cap T}$.
  If the action is transitive, then this is equal to $T$.
\\ Now by Proposition~\ref{p:trans}, the cotilers of $\T$ are (up to name change) cotilers of $\grp(T)$.
  Conversely, if $G\equiv_\mu\bigsqcup_{(g,\grp(T))\in C}g\grp(T)$, then $G\cdot0\equiv_\lambda\bigsqcup_{(g,\grp(T))\in C}g\cdot\geo(\grp(T))$.
  This gives $\W\equiv_\lambda\bigsqcup_{(g,\grp(T))\in C}g\cdot T$.
\end{proof}

\subsection{Building tiles in discrete groups}\label{ss:discr}
Rather than tiling with respect to continuous groups (often Lie groups)
such as $\Isom(\R^d)$ like in the previous subsection, we are now
interested in countable subgroups (like lattices over Lie groups), for which the orbit map is generally singular; hence Theorem~\ref{t:maincont} does not apply.

We propose two constructions for carrying a tiling inside the group.

\subsubsection{Using the orbit map}

In this subsection, $G$ is a countable group, endowed with discrete topology and counting measure.
Although Theorem~\ref{t:maincont} does not apply, we can compensate by a suitable choice of base point.

An immediate remark is that our discretization map is relevant for $\equiv_\lambda$-classes of subsets.
\begin{rem}\label{r:equiv}
  If $0\notin G\cdot(T\Delta T')$ (where $\Delta$ stands for the symmetric difference), then $\grp(T)=\grp(T')$.\\
  In particular, if $T\equiv_\lambda T'$, then for $\lambda$-almost every $0$, $\grp(T)=\grp(T')$.
\end{rem}

One says that the action $G\actson\W$ is \dfn{$\lambda$-properly discontinuous} if for every compact $K\subset\W$, only finitely many $g\cdot K$ intersect $K$ $\lambda$-non-trivially.
Note that this property is equivalent to: for every relatively compact (that is, included in a compact) $K,K'\subset\W$, only finitely many $g\cdot K$ intersect $K'$ $\lambda$-non-trivially.
Any properly discontinuous (in the sense that for every compact $K\subset\W$, only finitely many $g\cdot K$ intersect $K$) action is $\lambda$-properly discontinuous for every measure $\lambda$.

Our transformation map is nice in the sense that the tiles inherits some simplicity properties from the action.
\begin{prop}\label{p:fini}
  Let $T$ be {relatively compact}
  and $G\actson\W$ be properly discontinuous (resp, 
  $\lambda$-properly discontinuous), then $\grp(T)$ is finite for every (resp, $\lambda$-almost every)  $0\in\W$.
\end{prop}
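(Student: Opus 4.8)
The plan is to reduce the statement to a counting argument using the $\lambda$-properly discontinuous hypothesis, with the choice of base point $0$ serving to avoid a negligible bad set. First I would fix a compact set $L\subset\W$ containing the relatively compact set $T$ (so $T\subset L$, and we may as well take $L=\overline T$). Observe that $g\in\grp(T)$ means $g\cdot 0\in T\subset L$, equivalently $0\in g^{-1}\cdot T\subset g^{-1}\cdot L$. So $g\in\grp(T)$ forces $g^{-1}\cdot L$ to contain the point $0$, hence in particular $g^{-1}\cdot L$ meets $L$ (since $0\in L$). Therefore $\grp(T)\subseteq\{g\in G\mid g^{-1}\cdot L\cap L\ne\emptyset\}=\{g\in G\mid L\cap g\cdot L\ne\emptyset\}^{-1}$.

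In the properly discontinuous case, the set $\{g\in G\mid g\cdot L\cap L\ne\emptyset\}$ is finite by definition, and so is its image under $g\mapsto g^{-1}$; hence $\grp(T)$ is finite, for \emph{every} $0\in\W$ (here we use $0\in T$; if $0\notin T$ then $\grp(T)$ could even be empty, which is also finite, so in fact no restriction on $0$ is needed — but it is cleanest to note that $g\in\grp(T)$ implies $g\cdot0\in T$ implies $g^{-1}\cdot T\ni 0$, and we just need \emph{some} point of $g^{-1}\cdot L$ in $L$, which we always have as soon as $\grp(T)\ne\emptyset$).

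The $\lambda$-properly discontinuous case is where the ``$\lambda$-almost every $0$'' enters. Here the hypothesis only gives that finitely many $g\cdot L$ meet $L$ \emph{$\lambda$-non-trivially}, i.e. in positive measure; there could be infinitely many $g$ with $g\cdot L\cap L$ nonempty but $\lambda$-negligible. Let $E\defeq\{g\in G\mid \lambda(L\cap g\cdot L)=0\}$; for each such $g$, the set $L\cap g\cdot L$ is $\lambda$-negligible, so $\bigcup_{g\in E}(g^{-1}\cdot(L\cap g\cdot L))$ is a countable union of $\lambda$-negligible sets (using that $G$ is countable and the action is negligibility-preserving), hence $\lambda$-negligible. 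For $0$ outside this bad set, no $g\in E$ can satisfy $g\cdot 0\in L$ with $0\in g^{-1}\cdot L$... more carefully: if $g\in\grp(T)$ then $0\in g^{-1}\cdot T\subseteq g^{-1}\cdot L$ and $0\in L$ (as $g\cdot 0\in T\subseteq L$ gives nothing about $0$, so instead use $0=g^{-1}(g\cdot 0)\in g^{-1}\cdot T$, and separately we need $0\in L$). The right formulation: choose $0\notin\bigcup_{g\in E}g^{-1}\cdot(L\cap g\cdot L)$; then for $g\in E$ we cannot have both $0\in g^{-1}\cdot L$ and $g\cdot0\in L$, so $g\notin\grp(T)$ whenever additionally $0\in L$. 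Since we only care about $\grp(T)$ being finite and $\grp(T)\ne\emptyset$ already forces (via any witness) the relevant incidence, we conclude $\grp(T)\subseteq\{g\mid g\notin E\}^{\pm1}$ up to the bad set, which is finite.

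The main obstacle I anticipate is precisely this bookkeeping in the $\lambda$-properly discontinuous case: making sure the ``bad set'' of base points is genuinely $\lambda$-negligible and that removing it really does exclude all but finitely many group elements from $\grp(T)$. The clean way to organize it is: for $g$ with $\lambda(T\cap g\cdot T)=0$, put $B_g\defeq g^{-1}\cdot(T\cap g\cdot T)=g^{-1}\cdot T\cap T$ (a negligible set since $\orb$ preserves negligibility and $G$ is countable), set $B\defeq\bigcup B_g$; for $0\notin B$, if $g\in\grp(T)$ then $g\cdot0\in T$, i.e. $0\in g^{-1}\cdot T$, and if moreover $\lambda(T\cap g\cdot T)=0$ we would need... — one checks by a symmetric argument (replacing $g$ by $g^{-1}$, noting $T\cap g^{-1}T$ negligible iff $T\cap gT$ negligible since the action is negligibility-preserving) that $0\in B_{g^{-1}}$, a contradiction. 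Hence for $0\notin B$ every $g\in\grp(T)$ has $\lambda(T\cap g\cdot T)>0$, and by $\lambda$-proper discontinuity there are only finitely many such $g$. This completes the proof. $\square$
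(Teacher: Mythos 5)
Your overall plan follows the paper's proof (finiteness via proper discontinuity in the first case; in the second case, finiteness of $\U\defeq\{k\in G\mid\lambda(T\cap k\cdot T)>0\}$ plus a negligible bad set of base points), but as written there is a genuine gap: at two key steps you implicitly assume that the base point $0$ lies in $T$ (or in $L=\overline{T}$), and your attempt to argue the assumption away is incorrect. From $g\in\grp(T)$ you get $0\in g^{-1}\cdot L$, but to conclude $g^{-1}\cdot L\cap L\ne\emptyset$ you need $0\in L$; the claimed inclusion $\grp(T)\subseteq\{g\mid g\cdot L\cap L\ne\emptyset\}^{-1}$ is false in general, and ``$\grp(T)$ could even be empty if $0\notin T$'' is not a substitute. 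Concretely, take $\W=\R$, $G=\Z$ acting by translations, $T=[0,1)$ and base point $0=5.5$: then $\grp(T)=\{-5\}$ while $(-5)^{-1}\cdot L\cap L=[5,6]\cap[0,1]=\emptyset$. The same defect breaks the final claim of your measure-theoretic part: in this example your bad set $B=\bigcup_{\lambda(T\cap gT)=0}\bigl(g^{-1}\cdot T\cap T\bigr)$ is empty, so $0=5.5\notin B$, yet $g=-5\in\grp(T)$ has $\lambda(T\cap g\cdot T)=0$. Indeed $0\in g^{-1}\cdot T$ alone never places $0$ in any $B_h$; you would also need $0\in T$, which is exactly what cannot be assumed.

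Both halves are repairable, essentially along the paper's lines. For the properly discontinuous case, apply the definition to the compact set $\overline{T}\cup\{0\}$, or use the equivalent two-set formulation: only finitely many $g$ satisfy $g\cdot\{0\}\cap\overline{T}\ne\emptyset$, and $\grp(T)=\{g\in G\mid T\cap g\cdot\{0\}\ne\emptyset\}$; this needs no hypothesis on $0$ and is the paper's one-line argument. For the $\lambda$-properly discontinuous case, do not try to show that every $g\in\grp(T)$ has $\lambda(T\cap g\cdot T)>0$ (false, as above); instead compare two elements of $\grp(T)$: if $g\cdot 0$ and $h\cdot 0$ both lie in $T$, then $g\cdot 0\in T\cap gh^{-1}\cdot T$. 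Take the bad set to be the $G$-saturated set $R\defeq G\cdot\bigcup_{k\notin\U}(T\cap k\cdot T)$, negligible since $G$ is countable and acts by negligibility-preserving maps; for $0\notin R$ the point $g\cdot 0$ is also outside $R$, so $gh^{-1}\in\U$ for all $g,h\in\grp(T)$, whence $\grp(T)\subseteq\U h$ for any fixed $h\in\grp(T)$ and $\card{\grp(T)}\le\card{\U}<\infty$ by $\lambda$-proper discontinuity. (To be fair, the paper's own wording ``$\grp(T)\subset\U$'' tacitly presupposes $0\in T$ as well; the saturation of the bad set together with the difference trick $gh^{-1}$ is what makes the statement valid for $\lambda$-almost every base point.)
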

\begin{proof}
  $\grp(T)$ can be written as $\sett[G]g{T\cap g\cdot\{0\}\ne\emptyset}$, which is finite if the action is properly discontinuous. 
  If the action is $\lambda$-properly discontinuous, 
  this implies that $\U\defeq\sett[G]g{\lambda(T\cap g\cdot T)>0}$ is finite.
  By definition and countable additivity, $R\defeq G\cdot\bigcup_{g\notin\U}T\cap g\cdot T$ has measure $0$.
  For every $0\notin R$, $\grp(T)\subset\U$, which is finite.
\end{proof}

\begin{prop}\label{p:discpav}
  Let $G$ be a countable group, endowed with the counting measure.
  If $C$ is a cotiler of $\T$ in $\W$,
then for almost every $0\in\W$,
  $C$ is a cotiler of $\grp(\T)$ (with respect to $G$) in $G$ (up to syntactically renaming each $T$ into $\grp(T)$).
\end{prop}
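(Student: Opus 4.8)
The plan is to mimic the computation in the proof of Theorem~\ref{t:maink}, but with the orbit map $\orb_0$ playing the role that the grid $\K$ played there, and to push all the "bad" base points $0$ into a single null set. First I would fix a cotiler $C\in\ct(\T)$ and write out what it means: $\W\equiv_\lambda\bigsqcup_{(g,T)\in C}g\cdot T$. Since $C$ is countable and each $g\cdot T$ is measurable, the covering equality holds up to a $\lambda$-negligible set $N_0\subset\W$, and moreover for any two distinct $(g,T),(h,T')\in C$ the intersection $g\cdot T\cap h\cdot T'$ is $\lambda$-negligible, so enlarging $N_0$ by the (countable) union of all these pairwise intersections we get a null set outside of which every point of $\W$ lies in exactly one translate $g\cdot T$.

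Next I would apply the orbit map. For a base point $0\notin G\cdot N_0$ — which excludes only a $\lambda$-negligible set of $0$, since $G$ is countable and the action is negligibility-preserving — every point $g\cdot 0$ of the orbit $G\cdot 0$ lies in exactly one translate $h\cdot T$ with $(h,T)\in C$. Now $g\cdot 0\in h\cdot T$ is equivalent to $h^{-1}g\cdot 0\in T$, i.e. $h^{-1}g\in\grp(T)$, i.e. $g\in h\cdot\grp(T)$. So the statement "every $g\in G$ lies in exactly one $h\cdot\grp(T)$, $(h,T)\in C$" translates verbatim from the orbit condition, which gives
\[
  G=\bigsqcup_{(h,T)\in C}h\cdot\grp(T),
\]
with the counting measure on $G$ this is literal equality (no "up to negligible"), hence $\sett{(h,\grp(T))}{(h,T)\in C}$ is a cotiler of $\grp(\T)$ in $G$. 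This is the same renaming as in Theorem~\ref{t:maink}. One should also remark that $\grp(\T)$ is genuinely a finite tileset of $G$ for almost every $0$: the finiteness of each $\grp(T)$ is exactly Proposition~\ref{p:fini} (assuming the tiles are relatively compact and the action $\lambda$-properly discontinuous, as in Remark~\ref{r:tiles}), and one intersects the corresponding null sets over the finitely many $T\in\T$.

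The only subtlety — and the single point I would be careful about — is the logic "for each fixed $C$, for almost every $0$, $C$ works" versus "for almost every $0$, all $C$ work simultaneously". The statement as written only claims the former (the null set of bad $0$'s is allowed to depend on the cotiler $C$), so nothing further is needed; the set $\ct(\T)$ need not be countable. If one wanted the stronger uniform version one would need an additional argument (it follows, since being a cotiler is checkable on finite windows once the finite tileset $\grp(\T)$ is fixed, and $\grp(\T)$ itself depends on $0$ only through the null sets of Proposition~\ref{p:fini} and Remark~\ref{r:equiv}), but that is not required here. So the main obstacle is purely bookkeeping: collecting the countably many null sets (the failure set of the cover, the pairwise overlaps, their $G$-saturations, and the finiteness-failure sets from Proposition~\ref{p:fini}) into one $\lambda$-null set of forbidden base points, after which the orbit-map translation is a one-line transcription of the Theorem~\ref{t:maink} computation.
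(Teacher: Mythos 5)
Your proposal is correct and follows essentially the same route as the paper: exclude base points $0$ whose $G$-orbit meets the null set of uncovered points or pairwise overlaps (a null set of $0$'s since $G$ is countable and the action is negligibility-preserving), then transcribe the exact-cover condition through the orbit map via $g\cdot 0\in h\cdot T\iff g\in h\cdot\grp(T)$. The extra remarks on finiteness of $\grp(T)$ and on the quantifier order are not needed for the statement but are accurate.
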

\begin{proof}
  Let us show the result for $0$ in the set
  \[\bigcap_{h\in G}h(\bigcup_{(g,T)\in C}g\grp(T)\setminus\bigcup_{(g,T)\ne(g',T')\in C}g\grp(T)\cap g'\grp{(T')}).\]
  By definition of tilings and countable additivity of the measure, this set has full measure.

  By construction, for every $(g,T),(g',T')\in C$, $g\grp(T)\cap g'\grp(T')=\emptyset$, unless $(g,T)=(g',T')$; and for every $h\in G$, there exists $(g,T)\in C$ such that $h\cdot0\in g\cdot T$, that is to say $h\in g\grp(T)$.
\end{proof}

\subsubsection{Decomposing tiles}

We now propose an alternative approach to the orbit map. 

We assume that $G$ is a countable group acting cocompactly $\lambda$-properly discontinuously on $\W$ and that $\T$ is a finite set of relatively compact tiles.

Let $F$ be a relatively compact fundamental domain.
For $x\in F$, consider the set $\chi(x)\defeq\sett[G\times\T]{(g,T)}{x\in g\cdot T}$ of tiles that could cover $x$ according to some transformation.
We also define the \dfn{cell} $K_x\defeq\chi^{-1}(\{\chi(x)\})\subset F$ of $x$.

Let $\K\defeq\sett{K_x}{x\in F,\lambda(K_x)>0}$.
\begin{lem}\label{l:tfini}~
  \begin{enumerate}
  \item\label{i:kfini} Every $K\in\K$ has the form $K=\chi^{-1}(\{\chi(x)\})$ for some finite (possibly empty) $\chi(x)$.
  \item\label{i:kkfini} $\K$ is finite.
  \item\label{i:part} $\bigsqcup_{K\in\K}K\equiv_\lambda F$.
\end{enumerate}\end{lem}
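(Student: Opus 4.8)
The plan is to introduce the finite set of ``active'' pairs $\U\defeq\sett[G\times\T]{(g,T)}{\lambda(F\cap g\cdot T)>0}$ and to reduce all three items to the finiteness of $\U$. That finiteness I would establish first: since $F$ is relatively compact and each tile $T\in\T$ is relatively compact, the reformulation of $\lambda$-proper discontinuity recalled above (for relatively compact $K,K'$) shows that for each fixed $T\in\T$ only finitely many $g\in G$ satisfy $\lambda(g\cdot T\cap F)>0$; since $\T$ is finite, $\U$ is finite.

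For item~\ref{i:kfini}, the point is that a positive-measure cell is contained in every tile that covers it. Writing $K=K_x=\chi^{-1}(\{\chi(x)\})$ with $\lambda(K_x)>0$: if $(g,T)\in\chi(x)$ then every $y\in K_x$ has $\chi(y)=\chi(x)$, hence $y\in g\cdot T$, so $K_x\subseteq F\cap g\cdot T$ and therefore $\lambda(F\cap g\cdot T)\ge\lambda(K_x)>0$, i.e.\ $(g,T)\in\U$. Thus $\chi(x)\subseteq\U$, which is finite (possibly empty). For item~\ref{i:kkfini}, each $K\in\K$ is a fiber of $\chi$, hence is completely determined by the value $\chi$ takes on it, which by the previous step is a subset of the finite set $\U$; so $\K$ injects into the finite set $\mathcal P(\U)$, giving $\card\K\le 2^{\card\U}<\infty$.

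For item~\ref{i:part}, I would set $N\defeq\bigcup_{(g,T)\notin\U}(F\cap g\cdot T)$, a countable union of $\lambda$-negligible sets, so $\lambda(N)=0$, and note that $\chi(x)\subseteq\U$ for every $x\in F\setminus N$. Consequently $\chi$ takes only finitely many values on $F\setminus N$, which therefore decomposes as the finite disjoint union $F\setminus N=\bigsqcup_{S\subseteq\U}(\chi^{-1}(\{S\})\setminus N)$. The fibers $\chi^{-1}(\{S\})$ with $S\subseteq\U$ of positive measure are exactly the elements of $\K$, and each of them differs from $\chi^{-1}(\{S\})\setminus N$ by a $\lambda$-negligible subset of $N$; discarding the measure-zero fibers therefore gives $F\equiv_\lambda F\setminus N\equiv_\lambda\bigsqcup_{K\in\K}K$.

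I expect the finiteness of $\U$ to be the only genuinely delicate point: it is exactly there that cocompactness (through the existence of the relatively compact fundamental domain $F$), relative compactness of the tiles, and $\lambda$-proper discontinuity all enter together. Once $\U$ is known to be finite, the three items are essentially bookkeeping, resting on the single geometric observation that a positive-measure cell lies inside every tile covering it, so that its $\chi$-value cannot escape $\U$.
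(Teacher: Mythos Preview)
Your proof is correct and follows essentially the same approach as the paper: your set $\U$ is exactly the paper's $S=\bigsqcup_{T\in\T}S_T\times\{T\}$, your negligible set $N$ is the paper's $F\setminus F'$, and the three items are then the same bookkeeping. The only cosmetic difference is that for Item~\ref{i:kfini} you argue directly that $K_x\subseteq F\cap g\cdot T$ whenever $(g,T)\in\chi(x)$, whereas the paper first shows $\lambda(F\setminus F')=0$ and then picks a point of $K_x\cap F'$; both routes yield $\chi(x)\subseteq\U$ immediately.
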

\begin{proof}
  Since the action is $\lambda$-properly discontinuous and the tiles are relatively compact, the set $S_T$ of $g\in G$ such that $g\cdot T$ intersects $F$ $\lambda$-nontrivially is finite, for each $T\in\T$.
  The set $S\defeq\bigsqcup_{T\in\T}S_T\times\{T\}$ is still finite, and by construction, $\forall(g,T)\notin S$, $\lambda(F\cap g\cdot T)=0$.
  Let $F'\defeq\sett[F]x{\chi(x)\subset S}$.
  We have:
  \begin{align*}
    F\setminus F'&=\bigcup_{(g,T)\notin S}\sett[F]x{(g,T)\in\chi(x)}
    \\&=\bigcup_{(g,T)\notin S}F\cap g\cdot T.
  \end{align*}
  By definition of $S$ (and countable additivity), $\lambda(F\setminus F')=0$.
  We get that if $\lambda(K_x)>0$, then $\chi(x)\subset S$.
  This gives Item~\ref{i:kfini}, and since each $K\in\K$ has the form $\chi^{-1}(\chi(x))$ for some $\chi(x)\in S$, we get Item~\ref{i:kkfini}.

  By definition of $K_x$ as preimages, we know that they are all disjoint.
  Now
  \begin{align*}
    F\setminus\bigsqcup_{K\in\K}K\quad&\subset \quad (F\setminus F')\cup\bigcup_{S'\subset S,\lambda(\chi^{-1}(S'))=0}\chi^{-1}(S').
  \end{align*}
  By finite additivity, this set has measure $0$.
  This gives Item~\ref{i:part}.
\end{proof}

\begin{lem}\label{l:polyk}
  Every tile $T\in\T$ is poly-$\K$.
\end{lem}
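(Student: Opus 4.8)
The plan is to show that each tile $T\in\T$ is, up to a $\lambda$-negligible set, a disjoint union of $G$-translates of cells from $\K$. The natural candidate decomposition is to take, for every pair $(g,K)$ with $g\in G$, $K\in\K$, and $K\subseteq g^{-1}\cdot T$ (up to negligible sets), the translate $g\cdot K$; equivalently, I would set $\grp[\K](T)\defeq\sett{(g,K)\in G\times\K}{\lambda(g\cdot K\setminus T)=0}$, and claim $T\equiv_\lambda\bigsqcup_{(g,K)\in\grp[\K](T)}g\cdot K$. The key observation that makes this work is that $\K$ is a partition of the fundamental domain $F$ into sets on which $\chi$ is constant (Lemma~\ref{l:tfini}, Items~\ref{i:kkfini} and~\ref{i:part}); translating by $G$, the sets $g\cdot K$ for $(g,K)\in G\times\K$ tile $\W$ up to negligible sets (this uses that $F$ is a fundamental domain and that the $K_x$ are preimages, hence disjoint, together with the negligibility-preserving hypothesis on the action).

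The main step is then to verify that $T$ is \emph{saturated} with respect to this grid, i.e.\ that for each $(g,K)\in G\times\K$, either $g\cdot K\subseteq T$ or $g\cdot K\cap T$ is negligible. First I would reduce to $g=1$, i.e.\ to cells $K\subseteq F$, by replacing $T$ with $g^{-1}\cdot T$ and using that the action is negligibility-preserving in both directions (so negligibility of $g\cdot K\cap T$ is equivalent to that of $K\cap g^{-1}\cdot T$). Now fix $K=K_x=\chi^{-1}(\{\chi(x)\})\in\K$ with $\lambda(K)>0$. The point is that membership of a cell in $T$ is detected by $\chi$: namely, for $y\in F$, $y\in T$ (as a subset of $\W$, with $1\cdot T=T$) iff $(1,T)\in\chi(y)$. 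Since $\chi$ is constant equal to $\chi(x)$ on $K$, either $(1,T)\in\chi(x)$, in which case $K\subseteq T$, or $(1,T)\notin\chi(x)$, in which case $K\cap T=\emptyset$. This is the saturation property, and it holds exactly (not just up to negligible sets) for cells lying inside $F$; the only care needed is that $\bigsqcup_{K\in\K}K$ covers $F$ only up to a negligible set (Item~\ref{i:part}), so there is a negligible leftover $F\setminus\bigsqcup_K K$ to discard, and similarly one discards the negligible overlaps between distinct $g\cdot F$.

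With saturation in hand, the decomposition follows: writing $\W\equiv_\lambda\bigsqcup_{(g,K)\in G\times\K}g\cdot K$ and intersecting with $T$, I get $T\equiv_\lambda\bigsqcup_{(g,K)}\big(g\cdot K\cap T\big)$, and by saturation each summand is either $g\cdot K$ (when $(g,K)\in\grp[\K](T)$) or negligible, so $T\equiv_\lambda\bigsqcup_{(g,K)\in\grp[\K](T)}g\cdot K$. It remains to check that $\grp[\K](T)$ is finite: this is immediate since $\T$ is finite and, by the $\lambda$-proper discontinuity together with relative compactness of $T$ and of the cells (the cells sit inside the relatively compact $F$), only finitely many translates $g\cdot F$ meet $T$ $\lambda$-nontrivially, hence only finitely many pairs $(g,K)$ can contribute — this is essentially the finiteness argument already used for the set $S$ in the proof of Lemma~\ref{l:tfini}. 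I expect the main obstacle to be purely bookkeeping: tracking the various negligible exceptional sets (the leftover in $F$, the boundary overlaps of the $g\cdot F$, and the set where $\chi$ might misbehave) and making sure they can all be absorbed simultaneously, rather than any conceptual difficulty — the heart of the argument is the single clean fact that $\chi$ is constant on each cell and decides membership in every tile.
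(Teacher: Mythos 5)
Your proposal is correct and follows essentially the same route as the paper's proof: the grid of cells from Lemma~\ref{l:tfini}, the saturation property coming from the constancy of $\chi$ on each cell (so a translate $g\cdot K$ meeting $T$ is entirely contained in $T$), and $\lambda$-proper discontinuity for the finiteness of $\grp[\K](T)$. One cosmetic remark: your reduction to $g=1$ makes the membership test read ``$(1,T)\in\chi(y)$'' for the replaced tile $g^{-1}\cdot T$, which is not an element of $\T$; the correct (and simpler) test, used directly in the paper without any reduction, is $(g^{-1},T)\in\chi(y)$.
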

\begin{proof}
  By Item~\ref{i:part} of Lemma~\ref{l:tfini}, $T\subset\W\equiv_\lambda\bigsqcup_{(g,K)\in G\times\K}g\cdot K$.
  Moreover, if $g\cdot K$ nontrivially intersects $T$, this means that $K$ nontrivially intersects $g^{-1}\cdot T$: $(g^{-1},T)\in\chi(K)$.
  Hence $K\subset g^{-1}\cdot T$, and $g\cdot K\subset T$.
  Finally, $\lambda$-proper discontinuity gives that the set $\grp[\K](T)$ of $(g,K)$ such that $g\cdot K$ intersects $T$ nontrivially is finite.
  We obtain $T\equiv_\lambda\bigsqcup_{(g,K)\in\grp[\K](T)}g\cdot K$.
\end{proof}

\begin{thm}[Main result; discrete setting]\label{t:main}
  Assume that $G$ is countable and acts on $\W$ cocompactly $\lambda$-properly discontinuously, and that $\T$ is a finite set of relatively compact subsets of $\W$.
  Then there exists a finite set $\K$ of relatively compact tiles yielding a grid on $\W$ such that the tiles in $\T$ are poly-$\K$s,
  and $\T$ and $\grp[\K](\T)$ have the same cotilers (up to syntactically renaming each $T$ into $\grp[{\K}](T)$).
   \end{thm}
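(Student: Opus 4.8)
The plan is to take for $\K$ the cell decomposition $\K=\sett{K_x}{x\in F,\ \lambda(K_x)>0}$ of a relatively compact fundamental domain $F$ for $G\actson\W$, exactly as constructed above, and then to derive the statement from the already-proved poly-$\K$ version, Theorem~\ref{t:maink}. So the proof should be a short assembly of Lemmas~\ref{l:tfini} and~\ref{l:polyk} with Theorem~\ref{t:maink}: the real work has been done in those lemmas, and what remains is to check that the hypotheses of Theorem~\ref{t:maink} hold for this particular $\K$ and to reconcile the two discretization maps.

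I would carry it out in four steps. First, fix a relatively compact fundamental domain $F$ (guaranteed by cocompactness, and the standing assumption of the subsection) and form $\K$ as above; by Lemma~\ref{l:tfini}(\ref{i:kkfini}) it is finite, and since every cell lies in $F$ and has positive measure, each $K\in\K$ is a relatively compact tile. Second, check that $\K$ yields a grid: by Lemma~\ref{l:tfini}(\ref{i:part}), $\bigsqcup_{K\in\K}K\equiv_\lambda F$, so $\bigsqcup_{K\in\K}K$ is, up to negligible sets, a fundamental domain for $G\actson\W$; as $G$ is countable, this is exactly the condition of Definition~\ref{d:grid}, so $\W\equiv_\lambda\bigsqcup_{(h,K)\in G\times\K}h\cdot K$ with $G$ as cotiler of each $K$. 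The one delicate point is the measure-zero bookkeeping: the translates $g\cdot F$ and $g'\cdot F$ are essentially disjoint for $g\ne g'$ because $F$ is a fundamental domain, and $g\cdot K$ and $g\cdot K'$ are essentially disjoint for $K\ne K'\in\K$ because the cells are pairwise disjoint and $g$ (hence also $g^{-1}$) preserves negligibility. Third, invoke Lemma~\ref{l:polyk}: every $T\in\T$ is poly-$\K$, so $\T$ is a poly-$\K$ set and $\grp[\K](T)$ is the finite set of $(g,K)\in G\times\K$ with $g\cdot K\subset T$. Fourth, apply Theorem~\ref{t:maink} to this countable $G$ and this finite grid $\K$: $\grp[\K]$ is then a bijection from $\equiv_\lambda$-classes of poly-$\K$ sets of $\W$ onto finite tilesets of $G\times\K$, with $\ct(\grp[\K](\T))=\ct(\T)$ up to syntactically renaming each $T$ into $\grp[\K](T)$. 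In particular $\T$ and $\grp[\K](\T)$ have the same cotilers, which is the assertion, once $\grp$ is read as the discretization $\grp[\K]$ attached to the grid just built. (Equivalently, picking a base point $0_K\in K$ in each cell $K\in\K$, one has $\grp[\K](T)=\sett{(g,K)}{g\cdot0_K\in T}$ for almost every choice of the $0_K$'s, by the argument of Remark~\ref{r:equiv}.)

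The main obstacle does not lie in Theorem~\ref{t:main} itself, which is bookkeeping, but has already been overcome in Lemma~\ref{l:tfini}: the crux is the finiteness of $\K$, i.e.\ that only finitely many positive-measure overlap patterns $\chi(x)$ occur, and this is exactly where $\lambda$-proper discontinuity (finiteness of the sets $S_T$) and cocompactness (existence of a relatively compact $F$) are used. A secondary point worth keeping honest is that the plain orbit map of the previous subsection yields only the implication that a cotiler of $\T$ produces a cotiler of $\grp(\T)$ (Proposition~\ref{p:discpav}); the reverse implication, hence the full equality of cotiler sets, genuinely requires the multi-cell discretization $\grp[\K]$, which is why the proof must route through Theorem~\ref{t:maink}.
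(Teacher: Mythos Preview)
Your proposal is correct and follows exactly the paper's own approach: the paper's proof is a one-line invocation of Lemma~\ref{l:tfini}, Lemma~\ref{l:polyk}, and Theorem~\ref{t:maink}, and you have simply unpacked those steps in more detail. Your added commentary on why the grid condition holds and how $\grp$ should be read as $\grp[\K]$ is accurate and makes explicit what the paper leaves implicit.
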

   Consequently, if $G$ is countable and $G\actson\W$ is $\lambda$-properly discontinuous and admits some (weakly, mildly, strongly) aperiodic set of relatively compact tiles in $\W$, then one can find some (weakly, mildly, strongly) aperiodic set of finite tiles in $G$. 
   \begin{proof}
     This comes directly from Lemma~\ref{l:tfini}, Lemma~\ref{l:polyk}, and Theorem~\ref{t:maink}.
   \end{proof}
   
   In the previous statement, the transformation $\grp[\K](\T)$ depends on the set $\K$ of tiles that is built.
   It is posible to link this discretization with the discretization $\grp$, but in general, one gets only one inclusion between the cotiler sets.
   This is stated in the following proposition, which applies, for instance, just after Theorem~\ref{t:main}, to cocompact $\lambda$-properly discontinuous actions.
\begin{prop}
  If $\K$ yields a grid, then
  for $\lambda$-almost every $0\in\W$,
  there exists a unique $(g,K)\in G\times\K$ such that $0\in g\cdot K$.
  Then if $T$ is poly-$\K$, then $\grp(T)=\grp[g\cdot K](T)$.
  In particular, if $0_K\in\K$ is such a generic point for each $K\in\K$, then \[T\equiv_\lambda\bigsqcup_{K\in\K}\grp[0_K](T)\cdot K.\]
  In particular, if $\T$ is a set of poly-$\K$, then (up to syntactically renaming each $T$ into $\grp(T)$):
  \[\ct(\T)=\bigcap_{K\in\K}\ct(\grp[0_K](\T))=\bigcap_{0\in\W'}\ct(\grp(\T)),\]
  for some full-measure subset $\W'\subset\W$.
\end{prop}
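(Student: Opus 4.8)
The plan is to prove the three assertions of the proposition in sequence, working always up to a $\lambda$-negligible exceptional set of base points. First I would establish the uniqueness of $(g,K)\in G\times\K$ with $0\in g\cdot K$: since $\K$ yields a grid, $\W\equiv_\lambda\bigsqcup_{(g,K)\in G\times\K}g\cdot K$, so the set of points belonging to two distinct pieces $g\cdot K$ and $g'\cdot K'$ is $\lambda$-negligible by countable additivity, and the set of points belonging to no piece is $\lambda$-negligible as well. Outside this negligible set, $0$ lies in exactly one $g\cdot K$, which gives the first claim.

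Next I would prove $\grp(T)=\grp[g\cdot K](T)$ for $T$ a poly-$\K$. Here the key observation is that, since $T$ is (up to $\lambda$-negligible sets) a disjoint union of copies $h\cdot K'$ with $(h,K')\in\grp[\K](T)$, and since each such $h\cdot K'$ is itself one of the grid pieces, the orbit point $g_0\cdot 0$ (for $g_0\in G$) lies in $T$ if and only if the unique grid piece containing $g_0\cdot 0$ — namely $g_0 g\cdot K$, using equivariance of the grid decomposition under the $G$-action together with the first claim applied to $g_0\cdot0$ — is one of the pieces comprising $T$, i.e. $(g_0 g, K)\in\grp[\K](T)$. Translating this back, $g_0\in\grp(T)$ iff $g_0\in\grp[g\cdot K](T)$; I should be a little careful that the exceptional negligible set of base points for which this fails is exactly the $G$-orbit of the negligible sets from the first claim and from $T\equiv_\lambda\bigsqcup h\cdot K'$, which remains negligible by countable additivity since $G$ is countable and the action is negligibility-preserving. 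The stated decomposition $T\equiv_\lambda\bigsqcup_{K\in\K}\grp[0_K](T)\cdot K$ then follows by writing $\grp[\K](T)=\bigsqcup_{K\in\K}\grp[K](T)$ and identifying, for each $K$, the slice $\grp[K](T)=\grp[0_K](T)$ via the previous paragraph applied with $0=0_K$ (a generic point of $K$, so that the unique grid piece containing $0_K$ is $1\cdot K=K$ itself).

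Finally, for the cotiler identities: the equality $\ct(\T)=\bigcap_{K\in\K}\ct(\grp[0_K](\T))$ is immediate from Theorem~\ref{t:maink} (or Corollary~\ref{c:main-one-k}) combined with the remark in Section~\ref{s:discr} that $\ct(\T)=\bigcap_{K\in\K}\ct(\T_K)$, once we know $\grp[\K](T)$ decomposes into the slices $\grp[0_K](T)\times\{K\}$; and the equality with $\bigcap_{0\in\W'}\ct(\grp(\T))$ follows because, by the second claim, for each generic $0$ the set $\grp(\T)$ coincides with one of the $\grp[g\cdot K](\T)$, whose cotilers are in turn those of $\grp[0_K](\T)$ up to the isometry $g$, so intersecting over a full-measure $\W'$ recovers exactly the intersection over all $K\in\K$. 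The main obstacle I anticipate is purely bookkeeping: tracking which negligible set of base points must be excluded at each stage and checking that the union of all these (indexed over the countable group $G$ and the finite sets $\K$, $\T$) stays negligible — the measure-theoretic content is light, but stating it cleanly without drowning in quantifiers requires care, and one must not conflate the almost-everywhere statement "for $\lambda$-almost every $0$, $\grp(T)=\grp[g\cdot K](T)$" with a pointwise one.
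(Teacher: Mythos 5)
Your proposal is correct and follows essentially the same route as the paper's proof: exclude the $G$-orbit of the negligible discrepancies (overlaps between grid pieces and the difference between $T$ and its $\K$-decomposition), deduce that for generic $0\in g\cdot K$ one has $\grp(T)=\grp[g\cdot K](T)$ as the shifted $K$-slice, and obtain the cotiler identities from Theorem~\ref{t:maink} together with the layer decomposition $\ct(\T)=\bigcap_{K\in\K}\ct(\T_K)$. If anything, your bookkeeping of the exceptional null set (via the symmetric difference) is slightly more explicit than the paper's set $R$, but the argument is the same.
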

\begin{proof}
  Let $R\defeq G\cdot\bigcup_{\lambda(g\cdot K\cap T)=0}(g\cdot K\cap T)$.
  By countable additivity, $\lambda(R)=0$.
  By construction, for every $0\in g\cdot K\setminus R$, $h\cdot0\in T\iff\lambda(hg\cdot K\cap T)>0$, which means that $\grp(T)=\grp[g\cdot K](T)$.
  We conclude by Theorem~\ref{t:maink}.
\end{proof}


  \subsection{Crystallographic groups}
  
  From the Hat and our constructions above, we get a finite monotile
  that tiles mildly aperiodically the group $\Gamma$. Moreover,
  the dual tiling of the Semikitegrid is an Archimedean tiling and
  its $1$-skeleton can be labelled to become a Cayley graph of
  $\Gamma$. Thus, the geometric propreties of the Hat carry
  towards the monotile of $\Gamma$: it is finite and connected.

  This highly favorable context is that of crystallographic groups,
  which we explore in this subsection.

  A \dfn{crystallographic group} $G$ is a group acting properly
  discontinuously and cocompactly on a Euclidean space $\W=E$.
  
  From the classical work of Bieberbach~\cite{bieberbach}, the subgroup $L$ of the
  translations in $G$ is generated by $n$ linearly independent
  translations, is isomorphic to $\Z^n$ and has finite index in
  $G$.  For such a crystallographic group $G$, there
  exists a compact convex polytope $K$ {(fundamental domain)} that yields a grid.
  From
  Corollary~\ref{c:main-one-k} a poly-$K$ monotile $T$ is strongly
  (resp. mildly) aperiodic with respect to $G$ if, and only if,
  the monotile $\grp[K](T)$ is strongly (resp. mildly) aperiodic in
  $G$.
  {
  Note that Theorem~\ref{t:main} builds a refinement of that grid (yielded by the fundamental domain), starting from any given set $\T$ of tiles (also, the setting is more general, and may include non-Euclidean spaces).
}

  In dimension $n=2$, following Grünbaum and
  Shephard~\cite[Statement 4.3.1]{gshephard}, the crystallographic grid is {topologically equivalent to} the
  Laves tiling of an Archimedean tiling. The finite set
  $S = \{s\in G\ |\ s\cdot K\text{ and }K\text{ share a hyperface}\}$
  generates $G$. Labelling its edges, the
  $1$-skeleton of the dual Archimedean tiling is the Cayley graph of
  $(G,S)$ (there is a subtlety regarding orientation of edges:
  $S$ contains $s^{-1}$ for each of its element $s$; if such an $s$ is
  a reflection, then one needs not choose an orientation).
  The geometric properties of $T$ translate to geometric
  properties of $\grp[K](T)$; for instance, if $T$ is connected, then
  $\grp[K](T)$ is connected.

  We warn that we only consider tilings by $T$ with respect to
  $G$, that is to say the cotilers are inside $G$. It may
  happen that using more isometries $G\leq G'$ allows new
  tilings by $T$. This situation happened for the Hat: adding a
  reflection $\alpha$ to $G=\Gamma^{+}$, we studied the group $G'=\Gamma$ and
  the Semikitegrid. The Hat does not tile with respect to $\Gamma^{+}$, but
  tiles (aperiodically) with respect to $\Gamma$. For the Hat, no
  other isometry can be added: two adjacent copies of the Hat have to
  share edges and thus have to be drawn on the same Semikitegrid.

  In this context, we raise the following two questions:

  \begin{ques}
    For which crystallographic groups does there exist a mildly aperiodic monotile?

    For the Laves tiling of which Archimedean tiling does there exist a mildly aperiodic poly-$K$ monotile?
  \end{ques}

  \section{Conclusion}
  We are interested in the following metaquestion:
  \begin{ques}
    Which settings $G\actson\W$ admit aperiodic monotiles (or aperiodic finite tilesets)?
    Which countable groups $G$ (with translation action)?
  \end{ques}
After the big 2023 breakthroughs on monotileability, our work brings:
a general framework to compare the results,
a tool to transfer geometric tiles into group tiles,
and an explicit construction for an original aperiodic monotile in a very small group (in the sense that it is close to a group where such monotiles do not exist).
The result is comparable to \cite[Theorem~1.3]{gtao2}, yet more explicit.
On the other hand, the direct product is replaced in our construction by some semi-direct product, which makes our group not a quotient of some higher-dimensional grid.
This prevents to use the lifting operation from \cite{lift} (or even \cite{aperlift}) in order to recover (an explicit version for) \cite[Corollary~1.5]{gtao2}, building a weakly aperiodic monotile for some $\Z^d$ (and some $\R^d$).
It is not clear how to do a similar operation for semi-direct products.

Expliciting the stabilizers also quantifies somehow how far we are from strong aperiodicity.
A great improvement would be to obtain strong aperiodicity (in the sense that the tilings have no symmetry at all, instead of the order-3 $R_3$ for our tile).
One strategy could be to start modify our existing \gat\ tile into something that breaks the rotation (while still tiling), up to finitely extending the group $\Gamma$.

Another strategy would be to apply our \emph{groupification} to a geometric strongly aperiodic monotile.
But, though this is not emphasized in the geometric literature (where aperiodicity simply involves translations), it is surprising to notice that strongly aperiodic monotiles are unknown in all settings, including both the geometric and the group settings.

In particular, the Spectre from \cite{spectre} also admits tilings with stabilizer $\{\id,R_3,R_3^2\}$ (which are combinatorially equivalent to tilings by the Hat).
Anyway, though it is tempting to think of the Spectre as having fewer symmetries, because it tiles with respect to $\Isom^+(\R^2)$ (unlike the Hat), 
on the contrary, the cotiler group of any tiling is much bigger for the Spectre than for the Hat, in the sense that the action is minimal (the orbits are dense), as seen by the fact that the tiling does not lie over an Archimedean tiling.
This is also true for other famous examples of tilesets, like Penrose's.
\begin{ques}
    Does there exist a geometric monotile whose cotilers have trivial stabilizers?
    Does there exist one which is union of tiles from an Archimedean tiling?
  \end{ques}

  Another remark is that our Theorem~\ref{t:maink} is an equivalence, which could in theory lift some results from the group world to the geometric world: for instance, if one knows that $G$ admits no aperiodic monotile, then no space $\W$ over which it acts properly discontinuously admits an aperiodic monotile with respect to $G$.
  An example is that $\R^2$ cannot have aperiodic monotiles with respect to $\Z^2$, thanks to the discrete result from \cite{bhattacharya}, but this is already folklore.

The link formalized in this paper could help formalize the parallelism between the theory of FLC tilings and symbolic dynamics that can be read in many results with similar flavor.

Finally, let us note that aperiodic tilings are an important ingredient for undecidability results (since settings where aperiodic tile sets do not exist imply decidability of tileability and of many properties), both in logical or in algorithmic terms.
The first undecidability results about monotileability were achieved by \cite{gtao3} in some virtually-$\Z^2$ group.
Along history, each work establishing the possibility of aperiodicity in some setting was simultaneous to or soon followed by some work establishing undecidability in this setting (and usually based on similar constructions).
We are thus expecting more of them.
\begin{ques}
  Is it undecidable whether an algebraic polygonal monotile tiles $\R^2$?
  Is it undecidable whether a finite monotile tiles the symmetry group $\Gamma$ of the Semikite grid?
  Does there exist a dimension $d\in\N$ such that it is undecidable whether a finite monotile tiles $\Z^d$?
  \end{ques}

\paragraph{Acknowledgements:}
We thank Ch. Goodman-Strauss and T. Meyerovitch for discussions related to some points of this article, and N. Pytheas Fogg for raising the discussion that brought the original idea.
This work was supported by ANR Project IZES-ANR-22-CE40-0011 by Xpand MathAmSud 240026.

\bibliography{hat}
\bibliographystyle{alpha}

\end{document}

** Hat n Groups

- montrer que si le pavage obtenu dans le groupe est périodique, le géométrique l'était (ça donne une symétrie rotationnelle, mais on peut sans doute l'itérer. connu ?)
- try to reverse Turtle n Hat
- Socolar-Taylor dans la grille hexagonale (recoder les petis carrés en hexagones [Forcing nonperiodicity, Fig 6]) ? ça ne marche pas trop parce que les côtés des hexagones ne sont pas tous étiquetés par les mêmes générateurs.
- citer [[https://scholar.google.fr/citations?user=wca7248AAAAJ&hl=fr&oi=ao][Chaboud]] sur les pavages archimédiens